



\documentclass[
final            
,numberedheadings 
]{aipproc}

\layoutstyle{6x9}

\usepackage{amssymb}
\usepackage{amsmath}
\usepackage{amsthm}
\usepackage{pb-diagram}

\newcommand{\Tr}{\mathrm{\text{Tr}}}
\newcommand{\RM}{\mathbb{R}}
\newcommand{\ZM}{\mathbb{Z}}

\newcommand{\NM}{\mathbb{N}}

\newcommand{\PM}{\mathbb{P}}
\newcommand{\SM}{\mathbb{S}}
\def\e{\mathrm{e}}
\def\i{\mathrm{i}}
\def\d{\mathrm{d}}
\newtheorem{theorem}{Theorem}
\newtheorem{lemma}{Lemma}


\begin{document}

\title{Classical and quantum aspects of tomography}

\classification{03.65.Wj, 42.30.Wb, 02.30.Uu}
\keywords      {Radon transform; integral geometry}

\author{Paolo Facchi}{
  address={Dipartimento di Matematica, Universit\`a di Bari,
        I-70125  Bari, Italy}
  ,altaddress={INFN, Sezione di Bari, I-70126 Bari, Italy}
}

\author{Marilena Ligab\`o}{
  address={Dipartimento di Matematica, Universit\`a di Bari,
        I-70125  Bari, Italy}
}

\begin{abstract}
We present here a set of lecture notes on tomography.
The Radon transform and some of its generalizations are considered and their inversion formulae are proved.
We will also look from a group-theoretc point of view at the more general problem of expressing a function on a manifold in terms of its integrals over certain submanifolds. Finally, the extension of the tomographic maps to the quantum case is considered, as a Weyl-Wigner quantization of the classical case.
\end{abstract}

\maketitle

\section{Introduction}

We present here the notes of three lectures given by one of us at the XVIII International Fall Workshop on Geometry and Physics in Benasque. The course considers some aspects of tomographic mappings, a vast and very active area of different applications of integral geometry.

As a motivation, we start with the physical problem of the tomographic imaging of a two dimensional body by the measurements of the intensity attenuation of a beam of particles  passing through it. Then, we will introduce the Radon transform, which is the key mathematical tool for reconstructing the tomographic map of both the Wigner quasi-distribution of a quantum state and the probability distribution on the phase space of a classical particle. The original transform was introduced by J.~Radon \cite{Rad1917}, who proved that a differentiable function on  the three-dimensional Euclidean space can be determined explicitly by means of its integral over the planes. We will consider the straightforward generalization of the Radon transform to codimension-one hyperplanes in $\RM^n$ and will study its main properties. Moreover, we will prove the original Radon inversion formula, by making use of potential theory.

We will then look at a first generalization of the Radon transform: the $M^2$-transform which, roughly speaking, is defined on a larger space. We will show how it can be inverted very easily by using the Fourier transform. Then, we will unveil the relation between Radon and $M^2$, thus exhibiting an alternative, much simpler inversion formula.

Then, we will consider a broader framework and look at the more general problem of  expressing a function on a manifold in terms of its integrals over certain submanifolds. This has become an important topic in integral geometry with  many applications ranging from partial differential equations, group representations, and X-ray technology. The focus will be on invariant (or equivariant) transformations under some symmetry groups from the space of functions on one geometrical space to the space of functions on another geometrical space. We will also exhibit two examples that can be framed in the above setting and work out their respective inversion formulae: the Radon transform on codimension-$d$ planes and  the John transform on unit spheres in $\RM^3$.

We will then show a simple mechanism that allows to generalize the Radon transform, and its inverse, to more general families of codimensione-one submanifolds. As examples, we will look at two families of curves in the plane: the family of all circles passing through a fixed point and at the family of all equilateral hyperbolas with one fixed asymptote.

Finally, we will show how the above techniques can be used in the quantum case. A straightforward generalization derives from the phase-space description of quantum mechanics through Wigner quasi-distribution functions. We will see how the homodyne detection scheme  can be considered as the Weyl quantization of the Radon transform.

There is a huge literature on tomography and it is impossible to be complete in the references. We apologize in advance for any shortcomings in this respect.
There are several classical references: we suggest for a group-theoretic approach the book by Helgason \cite{Helgason} and for  the connection  with partial differential equations the book by John \cite{John}. Additional general references are Gel'fand and Shilov \cite{Gelf} and Strichartz \cite{Strichartz}.
Finally, for an up-to-date overview of the many applications of the Radon transform we refer to the book by Deans \cite{Deans}, which includes also an English translation of the 1917 paper by Radon, that announced the transform's discovery.

\begin{figure}[t]
\includegraphics[width=0.5\textwidth]{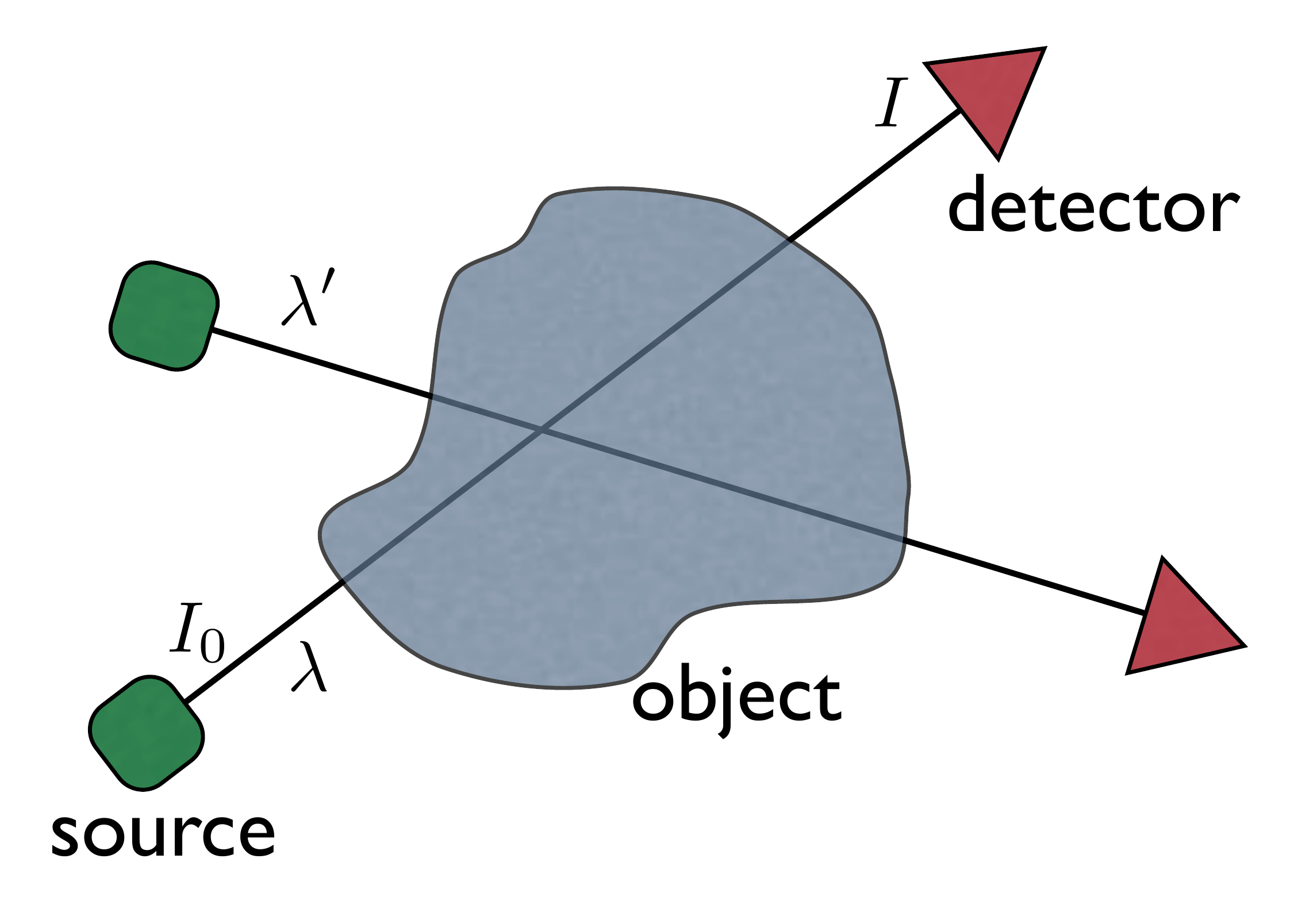}
\caption{Tomographic experimental setup.}
\label{detector}
\end{figure}

\section{Tomographic imaging in the plane}

Consider a body in the plane $\RM^2$, and consider a beam of particles (neutrons, electrons, X-rays, etc.) emitted by a source. Assume that the initial intensity of the beam is $I_{0}$. When the particles pass through the body they are absorbed or scattered and the intensity of the beam traversing a length $\Delta s$ decreases by an amount  proportional to the density of the body $\mu$, namely
\begin{equation}
\Delta I=-\Delta s\, \mu(s),
\end{equation}
so that
\begin{equation}
I(s)=I_{0}\exp \left(-\int_{0}^{s}\mu(r)\;\d r \right).
\end{equation}
A detector placed at the exit of the body measures the final intensity $I(s)$, and then from
\begin{equation}
-\ln \frac{I(s)}{I_0} =\int_{0}^{s} \mu(r)\; \d r
\end{equation}
one can record the value of the density integrated on a line.
If another ray with a different direction is considered, with the same procedure one obtains the value of the integral of the density on that line. See Fig \ref{detector}.

The mathematical model of the above setup is the following:
Given a normalized positive and smooth function $f=f(q,p)$, namely $f(q,p)\geq 0$ and
\begin{equation}
\int_{\RM^2} f(q,p)\; \d q\, \d p=1,
\end{equation}
define for all possible lines $\lambda$ in the plane $\RM^2$ the quantity
\begin{equation}\label{sharp lambda}
f^{\sharp}(\lambda)=\int_{\lambda}  f(x)\; \d m(x),
\end{equation}
where $\d m$ is the Euclidean measure on the line $\lambda$.
In this way, we have defined an operator $\sharp$ that maps a smooth function $f$ on the plane $\RM^2$ into a function $f^{\sharp}$ on $\PM^{2}$, the manifold of the lines in $\RM^2$. Along with the transform $f \mapsto f^\sharp$, it is convenient to consider also the dual transform $g \mapsto g^\flat$ which to a function $g$ on $\PM^{2}$ associates the function $g^\flat$ on $\RM^2$ given by
\begin{equation}
g^\flat(x)=\int_{x \in \lambda} g(\lambda) \; \d \mu(\lambda),
\end{equation}
where $\d \mu$ is the measure on the compact set $\{\lambda \in \PM^{2}|x \in \lambda\}$ which is invariant under the group of rotations around $x$ and for which the measure of the whole set is $1$.

\begin{figure}[t]
\includegraphics[width=0.5\textwidth]{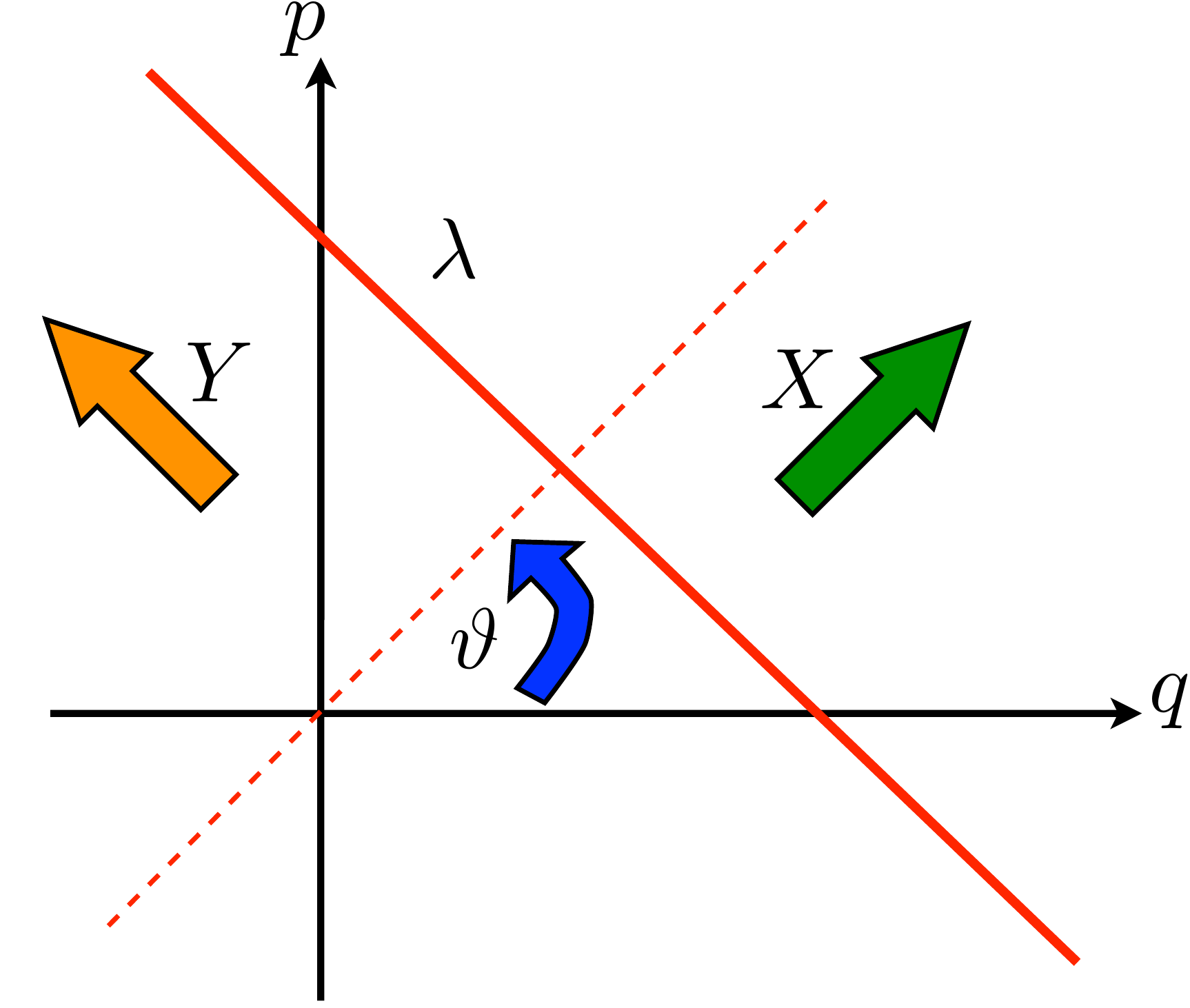}
\caption{Parametrization of $\lambda$ using its signed distance $X$ from the origin and a unit vector $\xi=(\cos \theta,\sin \theta)$ perpendicular to the line.}
\label{fig:radon 2d}
\end{figure}

We ask the following question: If we know the family of integrals $(f^{\sharp}(\lambda))_{\lambda \in \PM^{2}}$, can we reconstruct the total density function $f$? The answer is affirmative and in the following we will see how to obtain this result.

In order to write  the integral (\ref{sharp lambda}) in a more convenient form, let us parameterize $\PM^2$ in the following way
\begin{equation}
\RM \times \SM \to \PM^2, \qquad (X,\xi)\mapsto \lambda
\label{eq:doublecovering}
\end{equation}
where $X \in \RM$ is the (signed) distance of the line $\lambda$ from the origin, and $\xi=(\cos \theta, \sin \theta)\in\SM$ is a unit vector perpendicular to $\lambda$. This means that the Cartesian equation of $\lambda$ is
\begin{equation}
\lambda=\{(q,p)\in\RM^2 \,|\, X-q\cos \theta-p\sin \theta=0\}.
\end{equation}
By considering a pair $(X,Y)$ of Cartesian axes associated to the line $\lambda$, where $X$ is parallel to $\xi$, and $Y$ is perpendicular to $\xi$, we obtain the following parametric equation for $\lambda$
\begin{equation}
q=X\cos \theta-Y \sin \theta, \qquad p=X\sin \theta+Y\cos \theta.
\end{equation}
See Fig. \ref{fig:radon 2d}.
With a little abuse of notation, we have
\begin{equation}\label{radon transform 2d}
f^{\sharp}(\lambda)=f^{\sharp}(X,\xi)=\int_{\RM} \d Y \; f(X\cos \theta-Y \sin \theta,X\sin \theta+Y\cos \theta),
\end{equation}
that is called the \emph{Radon transform} of $f$.
Observe that the (\ref{eq:doublecovering}) is a double covering of $\PM^2$, because the pairs $(X,\xi)$ and $(-X,-\xi)$ correspond to the same line $\lambda$. However, one has
\begin{equation}
f^{\sharp}(X,\xi)=f^{\sharp}(-X,-\xi)
\end{equation}
and thus $f^{\sharp}$ depends only on the line $\lambda$ and not on its parameterization.

\section{Radon transform}\label{sec: Radon}

Let us generalize the above definitions to the case of a $n$-dimensional space.
Let $f$ be a function defined on $\RM^n$, integrable on each hyperplane in $\RM^n$ and let $\PM^n$ be the manifold of all hyperplanes in $\RM^n$. The Radon transform of $f$ is defined as follows
\begin{equation}\label{sharp lambda1}
f^{\sharp}(\lambda)=\int_{\lambda} f(x)\; \d m(x) ,
\end{equation}
where $\d m$ is the Euclidean measure on the hyperplane $\lambda$.
Thus we have an operator $\sharp$, the \emph{Radon transform}, that maps a function $f$ on $\RM^n$ into a function $f^{\sharp}$ on $\PM^{n}$, namely $f \to f^\sharp$. Its dual  transform, also called \emph{back projection operator}, $g \mapsto g^\flat$ associates to a function $g$ on $\PM^{n}$ the function $g^\flat$ on $\RM^n$ given by
\begin{equation}\label{dual map}
g^\flat(x)=\int_{x \in \lambda} g(\lambda) \; \d\mu(\lambda),
\end{equation}
where $\d\mu$ is the unique probability measure on the compact set $\{\lambda \in \PM^{n}|x \in \lambda\}$ which is invariant under the group of rotations around $x$.

\begin{figure}[t]
\includegraphics[width=0.45\textwidth]{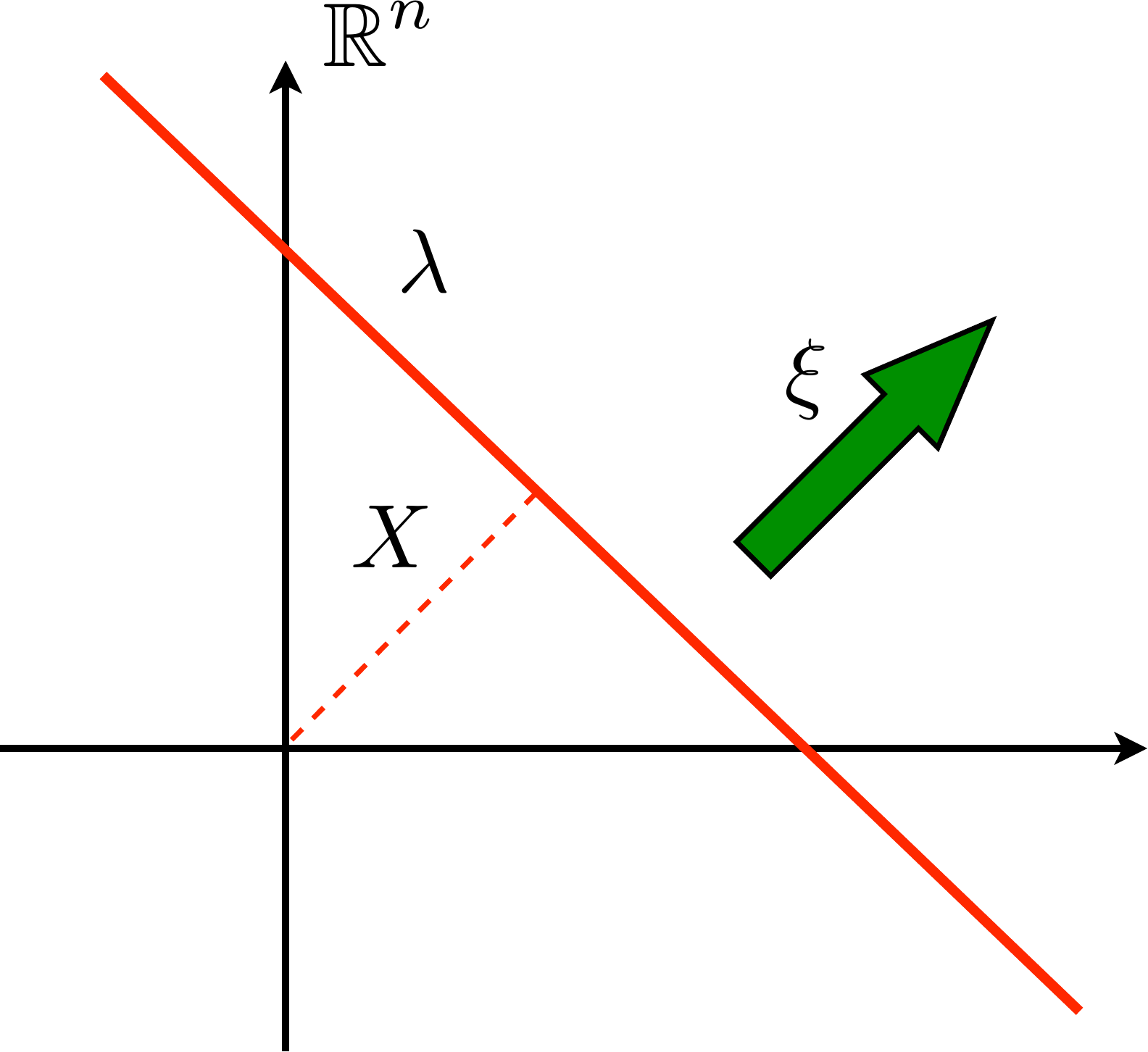}
\caption{Parametrization of the hyperplane $\lambda$ using its signed distance $X$ from the origin and a unit vector $\xi$ perpendicular to $\lambda$.}
\label{fig:radon nd}
\end{figure}

Let us consider the following covering of $\PM^n$
\begin{equation}\label{double covering}
\RM \times \SM^{n-1} \to \PM^n, \qquad (X,\xi)\mapsto \lambda,
\end{equation}
where $\SM^{n-1}$ is the unit sphere in $\RM^n$. Thus, the
equation of the hyperplane $\lambda$ is
\begin{equation}
\lambda=\{x\in\RM^n \, | \,  X-\xi \cdot x=0\},
\end{equation}
with $a\cdot b$ denoting the Euclidean inner product of $a, b\in\RM^n$. See Fig. \ref{fig:radon nd}.
Observe that the pairs $(X,\xi),(-X,-\xi) \in \RM \times \SM^{n-1}$ are mapped into the same hyperplane $\lambda \in \PM^n$. Therefore (\ref{double covering}) is a double covering of $\PM^n$.
Thus $\PM^n$ has a canonical manifold structure with respect to which this covering mapping is differentiable. We thus identify continuous (differentiable) functions $g$ on $\PM^n$ with continuous (differentiable) functions $g$ on $\RM \times \SM^{n-1}$ satisfying $g(X,\xi)=g(-X,-\xi)$.

We will work in the Schwartz space $\mathcal{S}(\RM^n)$ of complex-valued rapidly decreasing functions on $\RM^n$. We recall that $f \in \mathcal{S}(\RM^n)$ if and only if $f \in C^{\infty}(\RM^n)$ and for all multiindices  $\alpha=(\alpha_1,\dots,\alpha_n) \in \NM^n$ and $m \geq0$ results
\begin{equation}
\sup_{x \in \RM^n}\left| |x|^m D^\alpha f (x) \right| < +\infty,
\label{eq:Schwartz}
\end{equation}
where $|x|$ denotes the Euclidean $x$, and
\begin{equation}
D^\alpha f (x):=(\partial_{x_1}^{\alpha_1}\dots \partial_{x_n}^{\alpha_n}f)(x).
\end{equation}

In analogy with $\mathcal{S}(\RM^n)$ we define $\mathcal{S}(\RM \times \SM^{n-1})$ as the space of $C^{\infty}$ functions $g$ on $\RM \times \SM^{n-1}$ which for any integers $m,\alpha \geq 0$ and any differential operator $D$ on $\SM^{n-1}$ satisfy
\begin{equation}
\sup_{X \in \RM , \xi \in \SM^{n-1}}\left| |X|^m \frac{\partial^\alpha(Dg)}{\partial X^\alpha}(X,\xi)\right| < +\infty.
\end{equation}
The space $\mathcal{S}(\PM^n)$ is then defined as the set of $g \in \mathcal{S}(\RM \times \SM^{n-1})$ satisfying $g(-X,-\xi)=g(X,\xi)$.

If $f \in \mathcal{S}(\RM^n)$ its Radon transform (\ref{sharp lambda1}) can be written in a simple way, by using the Dirac $\delta$ distribution, as
\begin{equation}\label{Radon transform nd}
f^{\sharp}(X,\xi)=\int_{\RM^n} f(x)\, \delta(X-\xi\cdot x)\; \d x=\langle \delta(X-\xi\cdot x) \rangle_f,
\end{equation}
In the same way if $g \in \mathcal{S}(\PM^n)$ one has
\begin{equation}\label{flat}
g^{\flat}(x)=\int_{\RM\times \SM^{n-1}} g(X,\xi)\, \delta(X-\xi \cdot x)\; \d X \d\xi.
\end{equation}
Using (\ref{Radon transform nd}) and (\ref{flat}) it is easy to check the duality relation between the Radon transform $\sharp$ and its dual $\flat$, namely for every $f \in \mathcal{S}(\RM^n)$ and $g \in \mathcal{S}(\PM^n)$ one gets
\begin{equation}
\int_{\RM\times \SM^{n-1}} \d X \d\xi\; f^{\sharp}(X,\xi)\, g(X,\xi)= \int_{\RM^{n}} \d x \; f(x)\,  g^{\flat}(x).
\end{equation}
Indeed,
\begin{eqnarray}
\int_{\RM\times \SM^{n-1}} \d X \d\xi\; f^{\sharp}(X,\xi)\, g(X,\xi) & = & \int_{\RM\times \SM^{n-1}} \d X \d\xi  \int_{\RM^n} \d x\;  f(x)\, \delta(X-\xi\cdot x)\, g(X,\xi) \nonumber\\
& = & \int_{\RM^n}\d x\; f(x)\, g^{\flat}(x).
\end{eqnarray}

Now, consider a unit vector $\xi \in \SM^{n-1}$. We define \emph{the tomogram of $f$ along $\xi$} the function
\begin{equation}\label{tomogram}
f^{\sharp}_{\xi}(X):=f^{\sharp}(X,\xi).
\end{equation}
Physically, this means that we fix a direction $\xi$ and we look at the variation of the intensity of all the rays that interact with the body in that direction.
We also define the family $(f^{\sharp}_{\xi})_{\xi \in \SM^{n-1}}$ of tomograms associated to $f$.

The following result shows how the Radon transform maps a joint probability density function $f$ of $n$ random variables into a family of probability density functions $(f^{\sharp}_{\xi})_{\xi \in \SM^{n-1}}$ of a random variable.
\begin{lemma}
Let $f \in \mathcal{S}(\RM^n)$ be a nonnegative and normalized function, namely
\begin{equation}
f \geq 0 \qquad \mathrm{and} \qquad \int_{\RM^n}f(x) \; \d x=1.
\end{equation}
Then, for all $\xi \in \SM^{n-1}$ it results that
\begin{equation}
f^\sharp_{\xi} \geq 0 \qquad \mathrm{and} \qquad \int_{\RM}f^\sharp_{\xi}(X) \; \d X=1.
\end{equation}
\end{lemma}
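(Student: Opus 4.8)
The plan is to verify the two assertions separately, both flowing quickly from the representation (\ref{Radon transform nd}) of the tomogram, which by the definition (\ref{tomogram}) reads $f^\sharp_\xi(X) = \int_{\RM^n} f(x)\, \delta(X - \xi\cdot x)\, \d x$.

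First I would dispatch nonnegativity. For a fixed $\xi \in \SM^{n-1}$, the tomogram is the integral of $f$ over the hyperplane $\{x \in \RM^n : \xi\cdot x = X\}$ against the Euclidean measure $\d m$, which is a positive measure; since $f \geq 0$ by hypothesis, this integral is nonnegative, so $f^\sharp_\xi(X) \geq 0$ for every $X \in \RM$.

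For the normalization I would integrate the tomogram over $X$ and interchange the order of integration. Starting from $\int_\RM f^\sharp_\xi(X)\,\d X = \int_\RM \d X \int_{\RM^n} \d x\; f(x)\, \delta(X - \xi\cdot x)$, I would move the $X$-integration inside and use $\int_\RM \delta(X - \xi\cdot x)\, \d X = 1$ for each fixed $x$. This collapses the expression to $\int_{\RM^n} f(x)\, \d x$, which equals $1$ by the normalization hypothesis.

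The only point requiring care is the legitimacy of the interchange of integrals. Since $f \in \mathcal{S}(\RM^n)$ is rapidly decreasing and nonnegative, the joint integrand is nonnegative and integrable, so Tonelli's theorem justifies the swap; alternatively one may avoid the distributional $\delta$ altogether by choosing an orthonormal frame whose first vector is $\xi$, writing $x = X\xi + y$ with $y \in \xi^\perp \cong \RM^{n-1}$, and recognizing $(X,y)\mapsto x$ as a measure-preserving change of variables, after which the claim reduces directly to the normalization of $f$. I expect no genuine obstacle here: the result is essentially the statement that integrating a joint density over the fibers and then along the transverse direction recovers the total mass.
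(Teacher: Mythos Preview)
Your proposal is correct and follows essentially the same route as the paper's proof: nonnegativity is read off directly from the integral representation (\ref{Radon transform nd}), and normalization comes from swapping the $X$- and $x$-integrations and using $\int_{\RM}\delta(X-\xi\cdot x)\,\d X=1$. Your added remarks on Tonelli's theorem and the coordinate-change alternative simply supply justification that the paper leaves implicit.
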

\begin{proof}
Let us fix $\xi \in \SM^{n-1}$. Observe that for all $X \in \RM$
\begin{equation}
f^{\sharp}_{\xi}(X)=f^{\sharp}(X,\xi)=\int_{\RM^n}f(x)\,\delta(X-\xi \cdot x) \;\d x \geq 0.
\end{equation}
Moreover,
\begin{equation}
\int_{\RM}f^\sharp_{\xi}(X) \; \d X=\int_{\RM} \d X \int_{\RM^n} f(x)\, \delta(X-\xi \cdot x) \; \d x=\int_{\RM^n}f(x) \; \d x=1.
\end{equation}
\end{proof}

The Radon transform has a peculiar behavior under translation.
Given a displacement $a\in\RM^n$, consider the translation operator defined by
\begin{equation}
(\tau_a f) (x) = f(x-a).
\end{equation}
One has the following property
\begin{lemma}
\label{lm:translation}
Let $f \in  \mathcal{S}(\RM^n)$ and $a\in\RM^n$. Then for any $\xi\in\SM^{n-1}$
\begin{equation}
(\tau_a f)^\sharp_\xi=\tau_{\xi\cdot a} f^\sharp_{\xi}.
\end{equation}
\end{lemma}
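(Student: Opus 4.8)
The plan is to work directly from the delta-function representation (\ref{Radon transform nd}) of the Radon transform together with the definition (\ref{tomogram}) of the tomogram. Fixing $\xi \in \SM^{n-1}$, I would first write out the left-hand side using $(\tau_a f)(x) = f(x-a)$,
\begin{equation}
(\tau_a f)^\sharp_\xi(X) = \int_{\RM^n} (\tau_a f)(x)\, \delta(X - \xi \cdot x)\; \d x = \int_{\RM^n} f(x-a)\, \delta(X - \xi \cdot x)\; \d x .
\end{equation}

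The key step is the change of variables $y = x - a$, whose Jacobian is trivial since translation is measure-preserving on $\RM^n$. Under this substitution $\xi \cdot x = \xi \cdot y + \xi \cdot a$, so the argument of the delta becomes $X - \xi \cdot a - \xi \cdot y$. This is the heart of the matter: because the Radon transform sees the point $x$ only through the scalar $\xi \cdot x$, the $n$-dimensional translation by $a$ collapses to a shift of the scalar variable by $\xi \cdot a$. Collecting terms, the integral equals
\begin{equation}
\int_{\RM^n} f(y)\, \delta\big((X - \xi\cdot a) - \xi \cdot y\big)\; \d y = f^\sharp_\xi(X - \xi \cdot a),
\end{equation}
again by (\ref{Radon transform nd}) and (\ref{tomogram}).

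Finally I would recognize the right-hand side as the translate $(\tau_{\xi\cdot a} f^\sharp_\xi)(X)$, since $\tau_{\xi\cdot a}$ acts on functions of the single real variable $X$ by $(\tau_{\xi \cdot a} g)(X) = g(X - \xi \cdot a)$; as $X$ is arbitrary this yields the claimed identity. There is no genuine obstacle here: the only point requiring care is the bookkeeping of the delta argument under the substitution. A reader wanting a fully rigorous treatment could instead perform the same change of variables on the original hyperplane-integral definition (\ref{sharp lambda1}), where translating $f$ by $a$ carries the hyperplane $\{x : \xi \cdot x = X\}$ to $\{x : \xi \cdot x = X + \xi \cdot a\}$, producing the identical shift and avoiding any formal manipulation of $\delta$.
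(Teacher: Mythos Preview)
Your proof is correct and follows essentially the same route as the paper: write the tomogram via the delta representation, substitute $y=x-a$, and read off $f^\sharp_\xi(X-\xi\cdot a)=(\tau_{\xi\cdot a}f^\sharp_\xi)(X)$. The paper's argument is just the condensed version of what you wrote, without the added commentary on the geometric interpretation.
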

\begin{proof}
By setting $y= x- a$ we get
\begin{eqnarray}
(\tau_a f)^\sharp_\xi (X) &=&
\int_{\RM^n} f(x-a)\, \delta(X-\xi\cdot x)\;\d x =\int_{\RM^n} f(y)\, \delta(X-\xi\cdot y-\xi\cdot a)\;\d y \nonumber\\
&=& f^\sharp_\xi (X-\xi\cdot a). \nonumber
\end{eqnarray}
\end{proof}

The following lemma, which is the infinitesimal version of Lemma~\ref{lm:translation}, shows how the Radon transform and its dual intertwine the Laplacian on $\RM^n$, $\Delta=\sum_i \partial_{x_i}^2$ with the Laplacian on $\RM$, $\partial^2_X$.
\begin{lemma}
Let $f \in  \mathcal{S}(\RM^n)$ and $g \in  \mathcal{S}(\PM^n)$. Then
\begin{equation}
(\Delta f)^{\sharp}=\frac{\partial^2 f^{\sharp}}{\partial X^2} \qquad\text{and} \quad \left(\frac{\partial^2 g}{\partial X^2}\right)^{\flat}=\Delta g^{\flat}.
\label{eq:intertwine}
\end{equation}
\end{lemma}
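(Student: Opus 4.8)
The plan is to prove the first identity $(\Delta f)^\sharp = \partial_X^2 f^\sharp$ directly from the $\delta$-representation (\ref{Radon transform nd}), and then to deduce the second identity $(\partial_X^2 g)^\flat = \Delta g^\flat$ from the first by exploiting the duality between $\sharp$ and $\flat$.

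First I would differentiate $f^\sharp(X,\xi)=\int_{\RM^n} f(x)\,\delta(X-\xi\cdot x)\,\d x$ twice under the integral sign in $X$. The crucial observation is that $\delta(X-\xi\cdot x)$ depends on $x$ only through the combination $\xi\cdot x$, so the chain rule gives $\partial_{x_i}^2\delta(X-\xi\cdot x)=\xi_i^2\,\delta''(X-\xi\cdot x)$; summing over $i$ and using $|\xi|=1$ yields the key identity $\Delta_x\,\delta(X-\xi\cdot x)=\delta''(X-\xi\cdot x)=\partial_X^2\,\delta(X-\xi\cdot x)$. Hence $\partial_X^2 f^\sharp=\int_{\RM^n} f(x)\,\Delta_x\delta(X-\xi\cdot x)\,\d x$, and integrating by parts twice to move the Laplacian onto $f$ (the boundary terms vanish because $f\in\mathcal{S}(\RM^n)$ decays rapidly) produces $\int_{\RM^n}(\Delta f)(x)\,\delta(X-\xi\cdot x)\,\d x=(\Delta f)^\sharp$. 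I note that this matches the remark that the lemma is the infinitesimal version of Lemma~\ref{lm:translation}: differentiating $(\tau_a f)^\sharp_\xi=\tau_{\xi\cdot a}f^\sharp_\xi$ along $a=te_i$ at $t=0$ gives $(\partial_{x_i}f)^\sharp=\xi_i\,\partial_X f^\sharp$, and iterating and summing over $i$ reproduces $(\Delta f)^\sharp=|\xi|^2\partial_X^2 f^\sharp=\partial_X^2 f^\sharp$.

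For the second identity I would use the duality relation between the Radon transform and its dual, testing against an arbitrary $f\in\mathcal{S}(\RM^n)$. Pairing $(\partial_X^2 g)^\flat$ with $f$ and applying duality turns it into the pairing of $\partial_X^2 g$ with $f^\sharp$ over $\RM\times\SM^{n-1}$; integrating by parts twice in $X$ moves the derivative onto $f^\sharp$, and the first identity converts $\partial_X^2 f^\sharp$ into $(\Delta f)^\sharp$. Applying duality once more and integrating by parts twice in $\RM^n$ finally transfers the Laplacian onto $g^\flat$, giving $(\partial_X^2 g)^\flat=\Delta g^\flat$ since $f$ is arbitrary. Alternatively, one can run the same $\delta$-computation directly on (\ref{flat}), using $\Delta_x\delta(X-\xi\cdot x)=\partial_X^2\delta(X-\xi\cdot x)$ and integrating by parts in $X$.

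The main obstacle is not the algebra but its rigorous justification: the identity $\Delta_x\delta(X-\xi\cdot x)=\partial_X^2\delta(X-\xi\cdot x)$ and the ensuing integrations by parts are formal manipulations with the $\delta$ distribution, and one must check that differentiation under the integral, the vanishing of boundary terms, and the swapping of integrals are all legitimate. The cleanest way to make everything rigorous is to use rotation invariance to reduce to $\xi=e_n$: writing $x=(y,x_n)$ with $x_n$ along $\xi$, one has $f^\sharp(X,\xi)=\int_{\RM^{n-1}} f(y,X)\,\d y$, so $\partial_X^2 f^\sharp=\int_{\RM^{n-1}}\partial_{x_n}^2 f(y,X)\,\d y$, while in $(\Delta f)^\sharp$ the tangential derivatives $\sum_{i<n}\partial_{y_i}^2 f$ integrate to zero over $\RM^{n-1}$ (being divergences of rapidly decaying functions), leaving exactly $\int_{\RM^{n-1}}\partial_{x_n}^2 f(y,X)\,\d y$. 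This avoids distribution heuristics entirely, and the Schwartz decay of $f$ guarantees all boundary terms vanish; for the dual identity the same decay of the test function $f$ controls the boundary terms even though $g^\flat$ is only polynomially bounded.
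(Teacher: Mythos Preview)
Your proposal is correct. For the first identity your parenthetical remark---differentiating the translation property $(\tau_a f)^\sharp_\xi=\tau_{\xi\cdot a}f^\sharp_\xi$ at $a=te_i$, iterating, and summing---\emph{is} the paper's proof verbatim, so on that half you coincide with the paper (your $\delta$-manipulation and your rotation-to-$\xi=e_n$ argument are just alternative packagings of the same computation).

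Where you genuinely diverge is on the second identity. The paper does not invoke duality at all: it first integrates out $X$ in (\ref{flat}) to obtain the explicit formula $g^\flat(x)=\int_{\SM^{n-1}} g(\xi\cdot x,\xi)\,\d\xi$ and then simply differentiates under the integral via the chain rule, getting $\partial_{x_i}^2 g^\flat(x)=\int_{\SM^{n-1}} \xi_i^2\,\partial_X^2 g(\xi\cdot x,\xi)\,\d\xi$ and summing over $i$. This is more elementary than your route---no integration by parts, no test functions, no appeal to the first identity---and it makes the regularity issues trivial since $g\in\mathcal{S}(\PM^n)$ and the $\xi$-integral is over a compact set. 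Your duality argument, on the other hand, has the conceptual virtue of deducing the second identity from the first without any new computation, emphasizing that the two relations are formal adjoints of one another; the price is the extra bookkeeping with boundary terms that you correctly flag.
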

\begin{proof}
By Lemma~\ref{lm:translation} we get
\begin{equation}
(\tau_{a}f)^{\sharp}(X,\xi)=f^{\sharp}(X-\xi\cdot a, \xi).
\end{equation}
Therefore, by considering an orthonormal basis $(e_1,\dots,e_n)$ of $\RM^n$,
for all $X\in \RM^n$ and $\xi \in \SM^{n-1}$ one has that
\begin{equation}
\left(\frac{\partial f}{\partial x_{i}}\right)^{\sharp}(X,\xi)= \lim_{h \to 0} \left(\frac{f-\tau_{h e_{i}}f}{h}\right)^{\sharp} (X,\xi)=\xi_i \frac{\partial f^{\sharp}}{\partial X}(X,\xi)
\end{equation}
and thus
\begin{eqnarray}
\left(\frac{\partial^2 f}{\partial x_{i}^2}\right)^{\sharp} (X,\xi) =  \xi_i\frac{\partial}{\partial X} \left(\frac{\partial f}{\partial x_{i}}\right)^{\sharp} (X,\xi)
=\xi_i^2 \frac{\partial^2 f^{\sharp}}{\partial X^2}(X,\xi).
\end{eqnarray}
Therefore, since $\xi \in \SM^{n-1}$, by summing over $i$ we have the first equality in  (\ref{eq:intertwine}).

As for the second relation, observe that
\begin{equation}
g^{\flat}(x)=\int_{\RM\times \SM^{n-1}} g(X,\xi)\, \delta(X-\xi \cdot x)\; \d X \d\xi=\int_{\SM^{n-1}} g(\xi \cdot x,\xi)\; \d\xi,
\end{equation}
whence
\begin{equation}
\frac{\partial g^\flat}{\partial x_i}(x)=\int_{\SM^{n-1}} \frac{\partial g}{\partial X}(\xi \cdot x,\xi)\, \xi_i\; \d\xi,
\qquad
\frac{\partial^2 g^\flat}{\partial x_i^2}(x)=\int_{\SM^{n-1}} \frac{\partial^2 g}{\partial X^2}(\xi \cdot x,\xi) \, \xi_i^2\; \d\xi .
\end{equation}
Therefore,
\begin{equation}
\label{ }
\Delta g^\flat(x)=\sum_{i=1}^n \frac{\partial^2 g^\flat}{\partial x_i^2}(x)=\int_{\SM^{n-1}} \frac{\partial^2 g}{\partial X^2}(\xi \cdot x,\xi) \; \d\xi =\left(\frac{\partial^2 g}{\partial X^2}\right)^\flat(x).
\end{equation}
\end{proof}

The Radon transform is closely connected with the Fourier transform.  Let us recall some basic facts about the Fourier transform.
For any function $g\in\mathcal{S}(\RM^m)$ with $m\in\NM_*$, we define its Fourier transform
as
\begin{equation}\label{fourier transform}
\hat{g}(k)=\int_{\RM^m} g(x)\, \e^{-\i k \cdot x}\; \d x,
\end{equation}
with $k \in \RM^m$.  It is not difficult to prove that the Fourier transform is a bijection of $\mathcal{S}(\RM^m)$ onto itself, and its inverse reads
\begin{equation}\label{inv fourier transform}
\check{g}(x)=\int_{\RM^m} g(k)\, \e^{\i k \cdot x}\; \frac{\d k}{(2\pi)^m}.
\end{equation}
In short, for any  $g\in\mathcal{S}(\RM^m)$, one has that $\check{\hat{g}}=\hat{\check{g}}=g$. Using standard duality arguments the Fourier transform can be extended to the space of temperated distributions $\mathcal{S}'(\RM^m)$ (and thus also to its subspace of square summable functions $L^2(\RM^n)$, where it can also be proved to be unitary). In particular, we recall that $\hat{\delta}=1$ and thus
\begin{equation}
\delta(x)=\int_{\RM^n} \e^{\i k\cdot x}\; \frac{\d k}{(2\pi)^n}.
\end{equation}

We will show that the $n$-dimensional Fourier transform of a function $f$ on $\RM^n$ is the $1$-dimensional Fourier transform of its tomogram $f_\xi^\sharp$ for all $\xi\in\SM^{n-1}$. This result is known as the \emph{Fourier slice theorem}.
\begin{lemma}\label{R-F}
Consider $f \in \mathcal{S}(\RM^n)$, $\tau \in \RM$ and $\xi \in \SM^{n-1}$. Then
\begin{equation}\label{Radon-Fourier}
\hat{f}(\tau \xi)=\widehat{ f^\sharp_\xi} (\tau)
\end{equation}
\end{lemma}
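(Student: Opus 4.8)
The plan is to compute the right-hand side of (\ref{Radon-Fourier}) directly from the definitions and collapse it onto the left-hand side. First I would write out the one-dimensional Fourier transform (\ref{fourier transform}) of the tomogram (\ref{tomogram}), namely
\begin{equation}
\widehat{f^\sharp_\xi}(\tau)=\int_{\RM} f^\sharp_\xi(X)\, \e^{-\i\tau X}\;\d X,
\end{equation}
and then insert the $\delta$-representation (\ref{Radon transform nd}) of the tomogram, $f^\sharp_\xi(X)=\int_{\RM^n} f(x)\,\delta(X-\xi\cdot x)\;\d x$. This produces a double integral over $X\in\RM$ and $x\in\RM^n$ with integrand $f(x)\,\delta(X-\xi\cdot x)\,\e^{-\i\tau X}$.

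The key step is to exchange the order of integration and perform the $X$-integration first. Integrating $\e^{-\i\tau X}$ against $\delta(X-\xi\cdot x)$ collapses the $X$-integral and evaluates the exponential at $X=\xi\cdot x$, so that $\e^{-\i\tau X}$ is replaced by $\e^{-\i\tau(\xi\cdot x)}$ and only the $x$-integral survives:
\begin{equation}
\widehat{f^\sharp_\xi}(\tau)=\int_{\RM^n} f(x)\,\e^{-\i\tau(\xi\cdot x)}\;\d x.
\end{equation}

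Finally I would recognize the exponent. Since $(\tau\xi)\cdot x=\tau(\xi\cdot x)$, the last expression is precisely the $n$-dimensional Fourier transform (\ref{fourier transform}) of $f$ evaluated at the point $k=\tau\xi\in\RM^n$, that is $\hat f(\tau\xi)$, which establishes (\ref{Radon-Fourier}).

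The computation is essentially mechanical, so I do not expect a genuine obstacle; the only point deserving a word of care is the interchange of the two integrations together with the rigorous handling of the Dirac $\delta$. Both are harmless in the present setting: since $f\in\mathcal{S}(\RM^n)$, one may either regularize $\delta$ and justify the swap by Fubini's theorem, or simply read (\ref{Radon transform nd}) as the pairing of the tempered distribution $X\mapsto\delta(X-\xi\cdot x)$ with Schwartz data, after which the $X$-integration against $\e^{-\i\tau X}$ is an elementary evaluation. Either route leaves the displayed equalities intact.
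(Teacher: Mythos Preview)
Your argument is correct and follows exactly the same route as the paper: write $\widehat{f^\sharp_\xi}(\tau)$, insert the $\delta$-representation of the tomogram, swap the order of integration, collapse the $X$-integral against the $\delta$, and recognize the $n$-dimensional Fourier transform at $\tau\xi$. The only difference is that you add a brief justification for the interchange, which the paper omits.
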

\begin{proof}
The thesis follows immediately from the definitions. Indeed
\begin{eqnarray}
\widehat{ f^\sharp_\xi } (\tau) &=& \int_{\RM} \d X\; \e^{-\i\tau X}\, f^{\sharp}_\xi (X)
=\int_{\RM}\d X\; \e^{-\i\tau X} \int_{\RM^n} \d x\;  f(x)\, \delta(X-\xi \cdot x) \nonumber \\
& = & \int_{\RM^n} \d x\; f(x) \int_{\RM} \d X\;   \e^{-\i\tau X}\, \delta(X-\xi \cdot x)
= \int_{\RM^n} \d x\; f(x)\, \e^{-\i\tau \xi \cdot x}
= \hat{f}(\tau \xi). \quad
\end{eqnarray}
\end{proof}

Now we will address the problem of the range of the Radon transform with domain $\mathcal{S}(\RM^n)$. The following lemma is crucial.
\begin{lemma}\label{homogeneous}
If $f \in \mathcal{S}(\RM^n)$, then for any integer $k \in \NM$ and any $\xi=(\xi_1,\dots,\xi_n)\in\SM^{n-1}$, the integral of its tomogram,
\begin{equation}
I_f(\xi)=\int_{\RM}f^{\sharp}_\xi(X)\, X^k \; \d X,
\end{equation}
is a homogeneous polynomial of degree $k$ in the Cartesian components $\xi_1, \dots, \xi_n$.
\end{lemma}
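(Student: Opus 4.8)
The plan is to start from the $\delta$-function representation of the tomogram in~(\ref{Radon transform nd}) and simply interchange the order of integration. Writing
\begin{equation}
I_f(\xi) = \int_\RM \d X\; X^k \int_{\RM^n} \d x\; f(x)\, \delta(X-\xi\cdot x),
\end{equation}
I would invoke Fubini's theorem to swap the $X$ and $x$ integrals; this is legitimate because $f \in \mathcal{S}(\RM^n)$ is rapidly decreasing while $X^k$ grows only polynomially, so the double integral converges absolutely. Performing the $X$ integral against the Dirac $\delta$ collapses $X$ to $\xi\cdot x$ and yields the compact expression
\begin{equation}
I_f(\xi) = \int_{\RM^n} f(x)\,(\xi\cdot x)^k\; \d x.
\end{equation}

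Next I would expand the integrand by the multinomial theorem. Writing $\xi\cdot x = \sum_{i=1}^n \xi_i x_i$, one has
\begin{equation}
(\xi\cdot x)^k = \sum_{|\alpha|=k} \frac{k!}{\alpha_1!\cdots\alpha_n!}\, \xi_1^{\alpha_1}\cdots\xi_n^{\alpha_n}\, x_1^{\alpha_1}\cdots x_n^{\alpha_n},
\end{equation}
the sum running over all multiindices $\alpha=(\alpha_1,\dots,\alpha_n)\in\NM^n$ with $|\alpha|=\alpha_1+\cdots+\alpha_n=k$. Substituting and exchanging the finite sum with the integral gives
\begin{equation}
I_f(\xi) = \sum_{|\alpha|=k} c_\alpha\, \xi_1^{\alpha_1}\cdots\xi_n^{\alpha_n}, \qquad c_\alpha = \frac{k!}{\alpha_1!\cdots\alpha_n!} \int_{\RM^n} f(x)\, x_1^{\alpha_1}\cdots x_n^{\alpha_n}\; \d x.
\end{equation}

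It then remains only to observe that each coefficient $c_\alpha$ is a finite number, which is again guaranteed by the Schwartz condition~(\ref{eq:Schwartz}): the moment $\int_{\RM^n} f(x)\, x_1^{\alpha_1}\cdots x_n^{\alpha_n}\,\d x$ exists because $f$ decays faster than any inverse power of $|x|$. Since every monomial $\xi_1^{\alpha_1}\cdots\xi_n^{\alpha_n}$ in the sum has total degree $|\alpha|=k$, the function $I_f(\xi)$ is a homogeneous polynomial of degree $k$ in the components $\xi_1,\dots,\xi_n$, as claimed. The only step requiring any care is the justification of the interchange of integration orders, but this is immediate from the rapid decay of $f$; the rest is just the multinomial expansion.
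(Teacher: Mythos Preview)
Your argument is correct and follows exactly the paper's route: substitute the $\delta$-representation of $f^\sharp_\xi$, swap the integrals, and arrive at $I_f(\xi)=\int_{\RM^n} f(x)\,(\xi\cdot x)^k\,\d x$, which is manifestly a homogeneous polynomial of degree $k$ in $\xi$. The paper stops there, while you spell out the multinomial expansion and the finiteness of the moments; that extra detail is fine but not needed.
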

\begin{proof}
This is immediate from the relation
\begin{equation}
\int_{\RM}f^{\sharp}(X,\xi) \, X^k \; \d X = \int_{\RM} \d X \, X^k \int_{\RM^n} \d x\; f(x) \, \delta(X-\xi \cdot x)
= \int_{\RM^n} f(x) \, (\xi \cdot x)^k\; \d x.
\end{equation}
\end{proof}

In accordance with this lemma we define the space $\mathcal{S}_H(\PM^n)$ as follows: $g \in \mathcal{S}_H(\PM^n)$ if and only if $g \in \mathcal{S}(\PM^n)$ and for any integer $k \in \NM$ the integral
\begin{equation}
I_g(\xi)=\int_{\RM} g(X,\xi)\, X^k \; \d X
\end{equation}
is a homogeneous polynomial of degree $k$ in $\xi_1, \dots, \xi_n$.

\begin{theorem}
The Radon transform is a linear one-to-one mapping of $\mathcal{S}(\RM^n)$ onto $\mathcal{S}_H(\PM^n)$.
\label{th:Radonbijection}
\end{theorem}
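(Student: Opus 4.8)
The plan is to establish linearity, injectivity, and surjectivity separately, with the Fourier slice theorem (Lemma~\ref{R-F}) as the main engine throughout. Linearity is immediate from the linearity of the integral in~(\ref{sharp lambda1}). That the range is contained in $\mathcal{S}_H(\PM^n)$ is almost in place already: the symmetry $f^\sharp(X,\xi)=f^\sharp(-X,-\xi)$ was noted after~(\ref{double covering}), and the property that $I_f(\xi)$ is a homogeneous polynomial of degree $k$ is precisely Lemma~\ref{homogeneous}. What remains for the inclusion is to check the Schwartz estimates, i.e.\ that $f^\sharp$ is $C^\infty$ and rapidly decreasing in $X$ together with all its $X$-derivatives and all angular derivatives. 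I would obtain this from Lemma~\ref{R-F}: since $\widehat{f^\sharp_\xi}(\tau)=\hat f(\tau\xi)$ with $\hat f\in\mathcal{S}(\RM^n)$, inverting the one-dimensional Fourier transform gives $f^\sharp_\xi(X)=(2\pi)^{-1}\int_\RM \hat f(\tau\xi)\,\e^{\i\tau X}\,\d\tau$, from which the decay in $X$ and the smoothness in $\xi$ follow by differentiating under the integral sign and invoking the Schwartz bounds on $\hat f$.

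For injectivity, suppose $f\in\mathcal{S}(\RM^n)$ satisfies $f^\sharp=0$. Then $f^\sharp_\xi\equiv 0$ for every $\xi$, hence $\widehat{f^\sharp_\xi}\equiv 0$, and Lemma~\ref{R-F} yields $\hat f(\tau\xi)=0$ for all $\tau\in\RM$ and $\xi\in\SM^{n-1}$. Since every point of $\RM^n$ is of the form $\tau\xi$, we conclude $\hat f\equiv 0$, whence $f\equiv 0$ by the injectivity of the Fourier transform on $\mathcal{S}(\RM^n)$. This shows that $\sharp$ is one-to-one.

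The surjectivity is the substantial part. Given $g\in\mathcal{S}_H(\PM^n)$, I would reconstruct its preimage through the Fourier slice theorem read backwards. Set $\tilde g(\tau,\xi)=\int_\RM g(X,\xi)\,\e^{-\i\tau X}\,\d X$ and attempt to define $\hat f$ on $\RM^n$ by $\hat f(y)=\tilde g(|y|,y/|y|)$ for $y\neq 0$. The even symmetry of $g$ gives $\tilde g(\tau,\xi)=\tilde g(-\tau,-\xi)$, which makes this prescription unambiguous, and away from the origin $\hat f$ is manifestly smooth. The whole difficulty is concentrated at $y=0$: the map $(\tau,\xi)\mapsto\tau\xi$ degenerates there, so $\hat f$ need not be smooth at the origin for a generic $g\in\mathcal{S}(\PM^n)$. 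This is exactly where the homogeneity condition defining $\mathcal{S}_H(\PM^n)$ enters. Observing that $\partial_\tau^k\tilde g(0,\xi)=(-\i)^k I_g(\xi)$, the hypothesis that each $I_g(\xi)$ is a homogeneous polynomial of degree $k$ means that the Taylor coefficients of $\tilde g$ in $\tau$ at $\tau=0$ are homogeneous polynomials of the corresponding degree in $\xi$; substituting $\xi=y/|y|$ and $\tau=|y|$, and using $I_g(y/|y|)=|y|^{-k}I_g(y)$, turns the $k$-th term into a constant multiple of the genuine homogeneous polynomial $I_g(y)$ of degree $k$ in $y$, so the formal Taylor expansion of $\hat f$ at the origin is polynomial and $\hat f$ extends to a $C^\infty$ function there.

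I would then verify that this $\hat f$ is in fact Schwartz on $\RM^n$ — the decay being inherited from the decay of $\tilde g$ in $\tau$, and the delicate smoothness at the origin having just been secured — and set $f=\check{\hat f}\in\mathcal{S}(\RM^n)$. Finally, applying Lemma~\ref{R-F} to this $f$ gives $\widehat{f^\sharp_\xi}(\tau)=\hat f(\tau\xi)=\tilde g(\tau,\xi)$ for all $\xi$, and by injectivity of the one-dimensional Fourier transform $f^\sharp_\xi=g(\cdot,\xi)$, that is $f^\sharp=g$. I expect the main obstacle to be precisely the regularity of $\hat f$ at the origin: making the Taylor argument rigorous — controlling the remainders uniformly in $\xi$ and checking that all mixed derivatives exist and agree from every direction — is the technical heart of the theorem, and it is the step that forces the appearance of the homogeneity class $\mathcal{S}_H(\PM^n)$ in place of the full space $\mathcal{S}(\PM^n)$.
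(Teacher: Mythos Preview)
Your proposal is correct and follows precisely the approach the paper indicates: the paper does not give a detailed proof but merely states that one uses the bijectivity of the Fourier transform on $\mathcal{S}(\RM^n)$ together with Lemma~\ref{homogeneous}, referring to \cite{Helgason} for the details, and you have supplied exactly that argument via the Fourier slice theorem (Lemma~\ref{R-F}). Your identification of the smoothness of $\hat f$ at the origin as the technical heart --- and the reason the homogeneity class $\mathcal{S}_H(\PM^n)$ must replace $\mathcal{S}(\PM^n)$ --- is spot on.
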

The proof uses the well-known fact that the Fourier transform maps $\mathcal{S}(\RM^n)$ onto itself and Lemma~ \ref{homogeneous}. For more details see \cite{Helgason}. In the next section we will give a constructive proof, by explicitly  exhibiting the inverse mapping.

\section{Inversion formula}
Now we want to obtain the inversion formula, namely we want to prove that one can recover a function $f$ on $\RM^n$ if all its tomograms $(f^{\sharp}_{\xi})_{\xi \in \SM^{n-1}}$ are known. In order to get this result we need three preliminary lemmata.
\begin{lemma}\label{sharp flat}
Let $f \in \mathcal{S}(\RM^n)$ and $V(x)=1/|x|$, $x \in \RM^n$, $x\neq0$ . Then
\begin{equation}
(f^{\sharp})^{\flat}(x)=a_{n}\, f\ast V,
\end{equation}
where $a_n$ depends only on the dimension $n$, and
$*$ denotes the convolution product,
\begin{equation}
(f\ast g)(x) =\int_{\RM^n} f(y)\, g(x-y) \; \d y  .
\end{equation}
\end{lemma}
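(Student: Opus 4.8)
The plan is to evaluate $(f^\sharp)^\flat$ straight from the $\delta$-function representations and recognize the result as a convolution. First I would compose the Radon transform~(\ref{Radon transform nd}) with its dual~(\ref{flat}), using the simplified form $g^\flat(x)=\int_{\SM^{n-1}} g(\xi\cdot x,\xi)\,\d\xi$ already derived above. This gives
\[
(f^\sharp)^\flat(x) = \int_{\SM^{n-1}} f^\sharp(\xi\cdot x,\xi)\,\d\xi = \int_{\SM^{n-1}} \d\xi \int_{\RM^n} \d y\; f(y)\,\delta\bigl(\xi\cdot(x-y)\bigr).
\]
Since $f\in\mathcal{S}(\RM^n)$, Fubini's theorem justifies exchanging the two integrations, after which the $y$-integral is manifestly a convolution:
\[
(f^\sharp)^\flat(x) = \int_{\RM^n} \d y\; f(y)\, K(x-y), \qquad K(z) = \int_{\SM^{n-1}} \delta(\xi\cdot z)\,\d\xi.
\]

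The heart of the argument is then the evaluation of the kernel $K(z)$ for $z\neq 0$. Writing $z=|z|\,\hat z$ with $\hat z\in\SM^{n-1}$ and applying the scaling rule $\delta(|z|\,t)=|z|^{-1}\delta(t)$ of the one-dimensional Dirac distribution, I would factor out the modulus to obtain
\[
K(z) = \frac{1}{|z|}\int_{\SM^{n-1}} \delta(\xi\cdot\hat z)\,\d\xi.
\]
The remaining spherical integral is invariant under rotations sending $\hat z$ to any other unit vector, because the surface measure $\d\xi$ on $\SM^{n-1}$ is rotation invariant; hence it is independent of $\hat z$ and equals a constant $a_n$ depending only on $n$. (If one wants its value, slicing $\SM^{n-1}$ along the $\hat z$-axis yields $a_n=|\SM^{n-2}|$.) Consequently $K(z)=a_n/|z|=a_n V(z)$, and substituting back gives $(f^\sharp)^\flat = a_n\,(f\ast V)$, as claimed.

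The delicate point I expect to be the main obstacle is the formal manipulation of $\delta(\xi\cdot z)$, which is a distribution supported on a great subsphere of $\SM^{n-1}$. To make the scaling and invariance steps rigorous I would pair both sides against a test function in $\mathcal{S}(\RM^n)$, so that every $\delta$ is integrated against a smooth, rapidly decreasing integrand and the interchanges of integration are controlled by absolute convergence; the above computations then reduce to honest changes of variable on $\SM^{n-1}$. I would also record that $V(x)=1/|x|$ is locally integrable on $\RM^n$ for $n\geq 2$ (since $|x|^{-1}|x|^{n-1}$ is integrable near the origin) and decays at infinity, so that $f\ast V$ is well defined and finite for every $f\in\mathcal{S}(\RM^n)$.
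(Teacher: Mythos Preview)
Your argument is correct and follows essentially the same line as the paper's: both compose the $\delta$-representations of $\sharp$ and $\flat$, recognize a convolution with a kernel $K(z)$, and extract the $1/|z|$ factor by scaling, leaving a rotation-invariant spherical integral as the constant $a_n$. The only difference is cosmetic: the paper first rewrites $\delta(\xi\cdot(x-y))=\int_{\RM}\frac{\d s}{2\pi}\,\e^{\i s\,\xi\cdot(x-y)}$ and rescales the Fourier variable $s\mapsto t=s|x-y|$, arriving at $a_n=\int_{\RM}\frac{\d t}{2\pi}\int_{\SM^{n-1}}\d\xi\,\e^{\i t\,\xi\cdot\omega}$, whereas you invoke the homogeneity $\delta(|z|\,t)=|z|^{-1}\delta(t)$ directly and identify $a_n=\int_{\SM^{n-1}}\delta(\xi\cdot\hat z)\,\d\xi=|\SM^{n-2}|$. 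These are the same constant (the paper's oscillatory integral is just the Fourier representation of your $\delta(\xi\cdot\omega)$); your route is slightly more direct and even names the value of $a_n$, while the paper's Fourier form of $a_n$ is what gets reused verbatim in the subsequent proof that $a_n b_n=2(2\pi)^{n-1}$.
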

\begin{proof}
Observe that
\begin{eqnarray}\label{double}
(f^{\sharp})^{\flat}(x) & = & \int_{\SM^{n-1}}\d\xi \; f^{\sharp}(\xi \cdot x, \xi)
= \int_{\SM^{n-1}}\d\xi \int_{\RM^n} \d y\; f(y)\, \delta(\xi \cdot x- \xi \cdot y) \nonumber \\
                    & = & \int_{\SM^{n-1}}\d\xi \int_{\RM^n} \d y\; f(y) \int_{\RM} \frac{\d s}{2\pi}\; \e^{\i s \xi \cdot(x-y)}.
\end{eqnarray}
Now, with the substitution $x-y=|x-y|\omega$, $\omega \in \SM^{n-1}$, and $t=s|x-y|$, the integral (\ref{double}) becomes
\begin{equation}
\int_{\SM^{n-1}}\d\xi \int_{\RM^n} \d y\; f(y) \int_{\RM} \frac{\d t}{2\pi}\;\frac{1}{|x-y|}\,\e^{\i t \xi \cdot \omega}
=  a_{n}\int_{\RM^n} \d y\; f(y)\, V(x-y),
\end{equation}
where
\begin{equation}
a_{n}=\int_{\RM} \frac{\d t}{2\pi} \int_{\SM^{n-1}} \d\xi \; \e^{\i t\xi \cdot \omega}.
\label{eq:andef}
\end{equation}
By symmetry, it is apparent that $a_n$ is independent of $\omega$.
\end{proof}

\begin{figure}[t]
\includegraphics[width=0.6\textwidth]{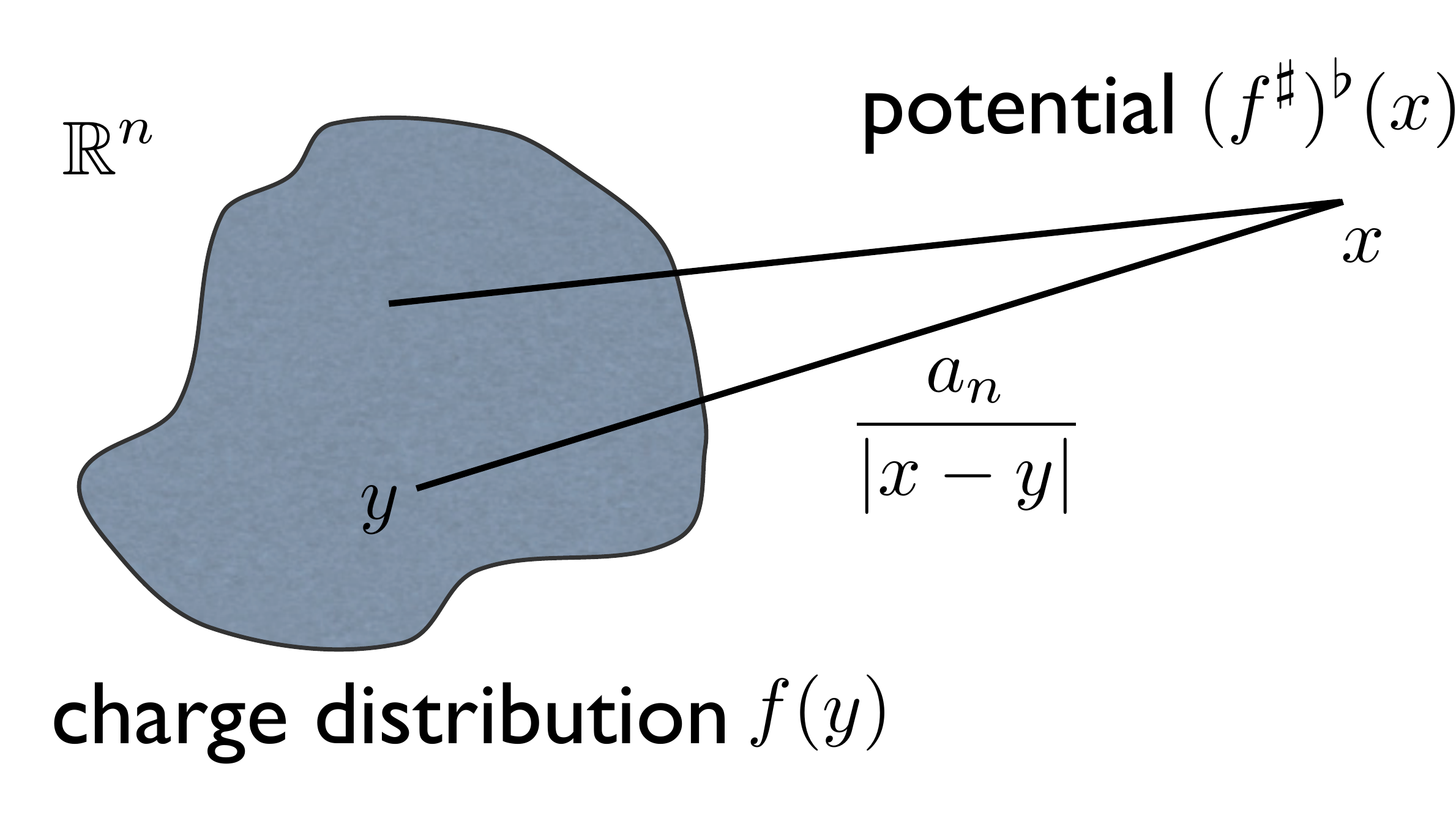}
\caption{$(f^{\sharp})^{\flat}(x)$ is the potential at $x$ generated by the charge distribution $f$.}
\label{fig:potential}
\end{figure}

The physical meaning of Lemma~\ref{sharp flat} is the following: if $f$ is a charge distribution, then the potential at the point $x$ generated by that charge is exactly $(f^{\sharp})^{\flat}(x)$, see Fig.~\ref{fig:potential}. Notice, however, that the potential of a point charge scales always as the inverse distance \emph{independently} of the dimension $n$, and thus it is Coulomb only for $n=3$. The only dependence on $n$ is in the strength of the elementary charge $a_n$. This fact will be crucial in the following: we will see that the associated Poisson equation involves an $n$-dependent (fractional) power of the Laplacian.

\begin{lemma}\label{lm:hatV}
Let $V(x)=1/|x|$, $x \in \RM^n$, $x\neq0$ . Then its Fourier transform is
\begin{equation}
\hat{V}(k)=\frac{b_n}{|k|^{n-1}}.
\end{equation}
where $b_n$ depends only on $n$
\end{lemma}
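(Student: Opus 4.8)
The plan is to avoid computing the integral directly and instead let the two symmetries of $V$ do all the work: $V(x)=1/|x|$ is invariant under rotations and homogeneous of degree $-1$. Since $V\notin\mathcal{S}(\RM^n)$, I read the Fourier transform in the distributional sense. For $n\geq 2$ the kernel is locally integrable at the origin, because $\int_{|x|<1}|x|^{-1}\,\d x\sim\int_0^1 r^{n-2}\,\d r<+\infty$, while $V(x)\leq 1$ for $|x|\geq 1$; hence $V$ defines a tempered distribution, $V\in\mathcal{S}'(\RM^n)$, and its Fourier transform $\hat V\in\mathcal{S}'(\RM^n)$ is well defined. Everything then reduces to showing how rotation invariance and homogeneity constrain $\hat V$.

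First I would extract the homogeneity of $\hat V$ from a scaling argument. For $a>0$ set $V_a(x)=V(ax)$; since $|ax|=a|x|$ one has $V_a=a^{-1}V$, whence $\widehat{V_a}=a^{-1}\hat V$. On the other hand the Fourier transform turns a dilation into the inverse dilation, $\widehat{V_a}(k)=a^{-n}\hat V(k/a)$, an identity that passes from $\mathcal{S}$ to $\mathcal{S}'$ by duality. Comparing the two expressions for $\widehat{V_a}$ gives $a^{-1}\hat V(k)=a^{-n}\hat V(k/a)$, i.e. $\hat V(ak)=a^{-(n-1)}\hat V(k)$: the distribution $\hat V$ is homogeneous of degree $-(n-1)$.

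Next I would use the rotational symmetry. Because $V(Rx)=V(x)$ for every $R\in SO(n)$ and the Fourier transform commutes with orthogonal changes of variable, $\hat V$ is itself rotation invariant, hence radial. A radial distribution homogeneous of degree $-(n-1)$ must be a constant multiple of $|k|^{-(n-1)}$: writing $k=|k|\,\hat k$ with $\hat k\in\SM^{n-1}$, and using first radial symmetry and then homogeneity with $a=|k|$,
\[
\hat V(k)=\hat V(|k|\,\hat k)=\hat V(|k|\,\xi_0)=|k|^{-(n-1)}\hat V(\xi_0),
\]
for any fixed $\xi_0\in\SM^{n-1}$. This is exactly $\hat V(k)=b_n/|k|^{n-1}$ with $b_n=\hat V(\xi_0)$ a number depending only on $n$; note that $b_n|k|^{-(n-1)}$ is itself locally integrable for $n\geq 2$, so the right-hand side is a genuine function away from the origin.

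The symmetry bookkeeping is elementary; the one genuinely delicate point is that for a homogeneous distribution the degree $-(n-1)$ must be \emph{non-exceptional}, i.e. free of the logarithmic or $\delta$-type corrections that can spoil the clean power law. This is guaranteed precisely by the constraint $0<1<n$ (equivalently $n\geq 2$): the source degree $-1$ lies strictly between $-n$ and $0$, so the target degree $-(n-1)$ lies strictly in $(-n,0)$ as well, the regular regime. A clean way to rigorize the whole argument without ever manipulating the singular kernel pointwise is to work with the analytic family $|x|^{-\alpha}$, $0<\alpha<n$, each member of which is a bona fide tempered distribution, run the scaling and rotation arguments there, and specialize to $\alpha=1$; the same computation, carried out by pairing against Gaussians, also delivers the explicit value of $b_n$ in terms of Gamma functions, although only its $n$-dependence is needed here.
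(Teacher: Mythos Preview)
Your argument is correct and rests on exactly the same two ingredients the paper uses---rotational invariance and homogeneity of degree $-1$ of $V$---so the approaches are essentially the same. The paper's proof is simply the direct, formal version of yours: it writes $k=|k|\omega$ with $\omega\in\SM^{n-1}$, substitutes $y=|k|x$ in the Fourier integral $\int_{\RM^n}|x|^{-1}\e^{-\i k\cdot x}\,\d x$, pulls out the factor $|k|^{-(n-1)}$, and declares the remaining integral $b_n=\int_{\RM^n}|x|^{-1}\e^{-\i\omega\cdot x}\,\d x$ to be independent of $\omega$ ``by symmetry.'' What you add is the distributional hygiene---checking that $V\in\mathcal{S}'(\RM^n)$, running the scaling and rotation arguments by duality rather than by manipulating a conditionally convergent integral, and noting that the degree $-(n-1)$ is non-exceptional so no $\delta$ or logarithmic terms appear. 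This buys you rigor that the paper's one-line change of variables glosses over, at the cost of being somewhat longer; the paper's version is adequate for its expository purposes and has the advantage of exhibiting $b_n$ explicitly as an integral, which it then uses in the next lemma.
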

\begin{proof}
One gets
\begin{equation}
\hat{V}(k) =  \int_{\RM^n}\d x\; \frac{1}{|x|}\, \e^{-\i k \cdot x} = \int_{\RM^n}\d x\;\frac{1}{|x|}\, \e^{-\i |k|\omega \cdot x}
= \int_{\RM^n}\frac{\d y}{|k|^{n-1}}\;\frac{1}{|y|}\, \e^{-\i\omega \cdot y}=\frac{b_n}{|k|^{n-1}},
\end{equation}
where we have set $k=|k| \omega$ with $\omega\in\SM^{n-1}$, $y=|k| x$, and
\begin{equation}
b_{n}=\int_{\RM^n} \d x\; \frac{\e^{-\i \omega \cdot x}}{|x|}.
\end{equation}
Moreover, $b_n$ is independent of $\omega$ by symmetry.
\end{proof}

\begin{lemma}\label{lm:anbn}
Let $a_n$ and $b_n$ as in Lemmata~\ref{sharp flat} and \ref{lm:hatV}. Then
\begin{equation}
a_n b_{n}=2(2\pi)^{n-1}.
\label{an bn1}
\end{equation}
\end{lemma}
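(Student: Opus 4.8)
The plan is to evaluate $a_n$ and $b_n$ separately in closed form and then multiply; the Gamma-function factors will cancel, leaving the claimed constant. Both integrals are rotation invariant, so I may fix $\omega$ to be any convenient unit vector, say $\omega=e_n$.

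First I would compute $a_n$. Performing the $t$-integration in the definition~(\ref{eq:andef}) before the angular one, and recalling $\int_{\RM}\frac{\d t}{2\pi}\,\e^{\i t y}=\delta(y)$, gives
\begin{equation}
a_n=\int_{\SM^{n-1}}\delta(\xi\cdot\omega)\;\d\xi .
\end{equation}
Parametrizing $\xi=\cos\phi\,\omega+\sin\phi\,\zeta$ with $\zeta\in\SM^{n-2}\subset\omega^{\perp}$ and $\phi\in[0,\pi]$, so that $\d\xi=\sin^{n-2}\phi\,\d\phi\,\d\zeta$ and $\xi\cdot\omega=\cos\phi$, the delta localizes the $\phi$-integral at the equator $\phi=\pi/2$ (where the Jacobian $|\d\cos\phi/\d\phi|=1$), and I obtain $a_n=|\SM^{n-2}|=2\pi^{(n-1)/2}/\Gamma\!\big(\tfrac{n-1}{2}\big)$, the surface area of the unit $(n-2)$-sphere.

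Next I would compute $b_n$. Using the Gamma (Schwinger) representation $|x|^{-1}=\Gamma(\tfrac12)^{-1}\int_0^\infty t^{-1/2}\e^{-t|x|^2}\,\d t$ and exchanging the order of integration, the $x$-integral becomes a shifted Gaussian, $\int_{\RM^n}\e^{-t|x|^2-\i\omega\cdot x}\,\d x=(\pi/t)^{n/2}\e^{-1/(4t)}$ (here $|\omega|=1$). The remaining $t$-integral is, after the substitution $u=1/(4t)$, another Gamma integral, and yields $b_n=2^{n-1}\pi^{(n-1)/2}\Gamma\!\big(\tfrac{n-1}{2}\big)$. Multiplying, the $\Gamma\!\big(\tfrac{n-1}{2}\big)$ factors cancel and the powers of $\pi$ and $2$ combine to $2^n\pi^{n-1}=2(2\pi)^{n-1}$, which is the assertion.

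The steps are individually routine; the only real obstacle is bookkeeping — keeping the powers of $2$ and $\pi$ and the argument of the Gamma function exactly right through the two substitutions, and justifying the interchange of integrations (Fubini/Tonelli, or working at the level of tempered distributions since $|x|^{-1}\in\mathcal{S}'(\RM^n)$ for $n\ge2$). As an independent check — and arguably a cleaner derivation that avoids special functions entirely — one can instead compute $(f^{\sharp})^{\flat}$ two ways: on one hand Lemma~\ref{sharp flat}, the convolution theorem and Lemma~\ref{lm:hatV} give $\widehat{(f^{\sharp})^{\flat}}(k)=a_nb_n\,\hat f(k)/|k|^{n-1}$, hence $(f^{\sharp})^{\flat}(x)=\frac{a_nb_n}{(2\pi)^n}\int_{\RM^n}\hat f(k)|k|^{-(n-1)}\e^{\i k\cdot x}\,\d k$; on the other hand, inserting the Fourier slice theorem (Lemma~\ref{R-F}) into $g^{\flat}(x)=\int_{\SM^{n-1}}g(\xi\cdot x,\xi)\,\d\xi$ and converting the $(\tau,\xi)\in\RM\times\SM^{n-1}$ integral into an integral over $k=\tau\xi\in\RM^n$ (a double cover, whence a factor $2$) gives $(f^{\sharp})^{\flat}(x)=\pi^{-1}\int_{\RM^n}\hat f(k)|k|^{-(n-1)}\e^{\i k\cdot x}\,\d k$. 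Since both representations hold for every $f\in\mathcal{S}(\RM^n)$, comparison forces $a_nb_n/(2\pi)^n=1/\pi$, i.e. $a_nb_n=2(2\pi)^{n-1}$; here the delicate point is precisely the factor of $2$ coming from the double cover.
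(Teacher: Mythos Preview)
Your proof is correct, but the paper argues differently and more directly. Instead of computing $a_n$ and $b_n$ separately via Gamma functions, the paper applies Fourier inversion to $V(x)=1/|x|$ itself: writing $1/|x|=\check{\hat V}(x)=\int_{\RM^n}\frac{b_n}{|k|^{n-1}}\e^{\i k\cdot x}\,\frac{\d k}{(2\pi)^n}$, scaling $p=|x|k$ pulls out $1/|x|$, and passing to polar coordinates $p=t\xi$ makes the $|p|^{n-1}$ in the denominator cancel the radial Jacobian, leaving exactly the integral~(\ref{eq:andef}) defining $a_n$ (the factor $2$ arising when the half-line $t\in\RM^+$ is symmetrized to $t\in\RM$). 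One reads off $a_n b_n/(2(2\pi)^{n-1})=1$ with no special functions at all. Your explicit computation has the compensating virtue of producing the individual values $a_n=|\SM^{n-2}|$ and $b_n=2^{n-1}\pi^{(n-1)/2}\Gamma\!\big(\tfrac{n-1}{2}\big)$, which the paper only records later in the codimension-$d$ discussion. Your alternative ``check'' via the Fourier slice theorem is essentially the paper's argument transplanted from $V$ to a general $f$; the double-cover factor of $2$ you flag is precisely the same one that appears in the paper's passage from $\int_{\RM^+}\d t$ to $\int_{\RM}\d t$.
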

\begin{proof}
Since the Fourier transform is a bijection, we use the trick of going back and forth.
From Lemma~ \ref{lm:hatV} we have
\begin{eqnarray}
V(x)=\frac{1}{|x|} =  \check{\hat{V}}(x)=
\int_{\RM^n}\hat{V}(k)\, \e^{\i k \cdot x} \; \frac{\d k }{(2\pi)^n}\
=  \int_{\RM^n}\frac{b_n}{|k|^{n-1}}\, \e^{\i k \cdot x}\;  \frac{\d k}{(2\pi)^n}.
\label{an bn}
\end{eqnarray}
By setting  $x=|x|\omega$ and $p=|x|k= t\xi$, with $\omega, \xi \in \SM^{n-1}$ and $t\in\RM^+$, Eq.~(\ref{an bn}) becomes
\begin{eqnarray}
\frac{1}{|x|}
& = &\frac{b_n}{(2\pi)^n}\int_{\RM^n}\frac{\d k}{|k|^{n-1}}\, \e^{\i k \cdot \omega |x|}
= \frac{b_n}{(2\pi)^n|x|}\int_{\RM^n}\frac{\d p}{|p|^{n-1}}\, \e^{\i p \cdot \omega} \nonumber\\
& = & \frac{b_n}{(2\pi)^n|x|}
\int_{\RM^+} \d t \; t^{n-1}\int_{\SM^{n-1}}\frac{\d\xi}{t^{n-1}}\; \e^{\i t\xi \cdot \omega} \nonumber\\
& = & \frac{b_n}{2(2\pi)^{n-1}|x|}\int_{\RM}\frac{\d t}{2\pi} \int_{\SM^{n-1}}\d\xi\; \e^{\i t\xi\cdot\omega}
= \frac{a_n b_n}{2(2\pi)^{n-1}}\, \frac{1}{|x|},
\end{eqnarray}
where in the last equality we used Eq.~(\ref{eq:andef}) of Lemma~\ref{sharp flat}.
Therefore,
\begin{equation}
\frac{a_n b_n}{2(2\pi)^{n-1}}=1
\end{equation}
and the thesis follows.
\end{proof}

Now we are ready to prove the inversion formula for the Radon transform.
\begin{theorem}
Let $f \in \mathcal{S}(\RM^n)$. Then
\begin{equation}\label{inversion formula0}
f(x)= \frac{1}{2(2\pi)^{n-1}}(-\Delta)^{\frac{n-1}{2}} (f^\sharp)^\flat(x)
\end{equation}
where $(-\Delta)^\alpha$, with $\alpha>0$, is a pseudodifferential operator whose action is
\begin{equation}
((-\Delta)^\alpha f) (x)= \int_{\RM^n} |k|^{2\alpha}\, \hat{f}(k)\,  \e^{\i k \cdot x}\; \frac{\d k}{(2\pi)^n}.
\label{eq:fractionalLap}
\end{equation}
\end{theorem}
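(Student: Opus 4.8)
The plan is to reduce everything to a computation in Fourier space, exploiting the three preliminary lemmata. The key observation is that Lemma~\ref{sharp flat} already expresses $(f^{\sharp})^{\flat}$ as a convolution, namely $(f^{\sharp})^{\flat}=a_n\, f\ast V$ with $V(x)=1/|x|$. Since the Fourier transform turns convolution into multiplication, I would first compute
\[
\widehat{(f^{\sharp})^{\flat}}(k)=a_n\,\hat{f}(k)\,\hat{V}(k)=a_n b_n\,\frac{\hat{f}(k)}{|k|^{n-1}},
\]
where the second equality uses Lemma~\ref{lm:hatV}.

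Next I would apply the pseudodifferential operator $(-\Delta)^{(n-1)/2}$. By its very definition~(\ref{eq:fractionalLap}), this operator acts in Fourier space as multiplication by $|k|^{2\alpha}=|k|^{n-1}$ when $\alpha=(n-1)/2$. Hence
\[
\widehat{(-\Delta)^{\frac{n-1}{2}}(f^{\sharp})^{\flat}}(k)=|k|^{n-1}\,\widehat{(f^{\sharp})^{\flat}}(k)=a_n b_n\,\hat{f}(k).
\]
The singular factors $|k|^{n-1}$ and $|k|^{-(n-1)}$ cancel exactly; this is the crucial point, since the Poisson-type kernel $V$ contributes precisely the power of $|k|$ that the fractional Laplacian removes, leaving a plain multiple of $\hat{f}$.

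Finally, I would invoke Lemma~\ref{lm:anbn}, which gives $a_n b_n=2(2\pi)^{n-1}$, so that $\widehat{(-\Delta)^{\frac{n-1}{2}}(f^{\sharp})^{\flat}}(k)=2(2\pi)^{n-1}\,\hat{f}(k)$. Taking the inverse Fourier transform of both sides and using that it is a bijection on $\mathcal{S}(\RM^n)$ yields $(-\Delta)^{\frac{n-1}{2}}(f^{\sharp})^{\flat}(x)=2(2\pi)^{n-1}f(x)$, which is exactly the asserted inversion formula after dividing by the constant $2(2\pi)^{n-1}$.

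I do not expect a deep obstacle here, only the need to justify the formal manipulations at the level of tempered distributions: one should check that $V$ defines an element of $\mathcal{S}'(\RM^n)$ (valid for $n\geq 2$, since $1/|x|$ is locally integrable and polynomially bounded), that the convolution theorem may be combined with the distributional identity $\hat{V}=b_n/|k|^{n-1}$ from Lemma~\ref{lm:hatV}, and that the product $|k|^{n-1}\cdot|k|^{-(n-1)}\hat{f}(k)=a_n b_n\hat{f}(k)$ is a genuine Schwartz function so that its inverse Fourier transform returns $f$. All of this follows from the rapid decay of $\hat{f}$ together with the explicit cancellation of the $|k|$-powers, so the argument amounts to a clean bookkeeping of Fourier multipliers rather than any substantive analytic difficulty.
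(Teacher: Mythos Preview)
Your argument is correct and follows essentially the same route as the paper: both proofs combine Lemma~\ref{sharp flat}, Lemma~\ref{lm:hatV}, and Lemma~\ref{lm:anbn} via the Fourier-multiplier cancellation $|k|^{n-1}\cdot|k|^{-(n-1)}=1$. The only cosmetic difference is that the paper first isolates the distributional identity $(-\Delta)^{(n-1)/2}V=b_n\,\delta$ and then convolves with $f$, whereas you take the Fourier transform of the whole expression $(f^{\sharp})^{\flat}$ at once; the underlying computation is identical.
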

\begin{proof}
First of all observe that if $n=1$ statement (\ref{inversion formula}) is trivial, so assume that $n >1$. Let $p>0$, we get by Lemma~ \ref{sharp flat}
\begin{equation}
(-\Delta)^p V(x)= \int_{\RM^n}\frac{\d k}{(2\pi)^n}\; |k|^{2p}\, \hat{V}(k)\,  \e^{\i k \cdot x}
= b_n \int_{\RM^n}\frac{\d k}{(2\pi)^n}\; |k|^{2p-n+1}\,  \e^{\i k \cdot x} ,
\end{equation}
so that, if $2p=n-1$, we have
\begin{equation}
(-\Delta)^{\frac{n-1}{2}} V(x)=b_n\delta(x).
\end{equation}
Therefore, again by Lemma \ref{sharp flat}, we have
\begin{equation}
(-\Delta)^{\frac{n-1}{2}}(f^{\sharp})^\flat (x) =  a_n \left( f \ast (-\Delta)^{\frac{n-1}{2}}V \right)(x)
= a_n b_n (f \ast \delta)(x)
= a_n b_n f(x).
\end{equation}
From (\ref{an bn1}) we get the inversion formula (\ref{inversion formula}).
\end{proof}
Equation~(\ref{inversion formula0}) says that, modulo the final action of $(-\Delta)^{\frac{n-1}{2}}$, the function $f$ can be recovered from its Radon transform $f^\sharp$ by the application of the dual mapping $\flat$: first one integrates over the set of points in a hyperplane and then one integrates over the set of hyperplanes passing through a given point. Explicitly we get
\begin{equation}\label{inversion formula}
f(x) =\frac{1}{2(2\pi)^{n-1}}(-\Delta)^{\frac{n-1}{2}}\int_{\SM^{n-1}}   f^{\sharp}(\xi \cdot x, \xi)\; \d\xi,
\end{equation}
which has the following remarkable interpretation. Note that if one fixes a direction $\xi \in \SM^{n-1}$, then the function $f^{\sharp}(\xi \cdot x,\xi)$ is a plane wave, i.e.\ a function constant on each plane perpendicular to $\xi$. Therefore, Eq.~(\ref{inversion formula}) gives a representation of $f$ in terms of a continuous superposition of plane waves. A well-known analogous decomposition is given by Fourier transform. When $n=3$, one recovers the inversion formula originally found by Radon \cite{Rad1917}
\begin{equation}
f(x) =-\frac{1}{8 \pi^{2}} \Delta \int_{\SM^{2}}   f^{\sharp}(\xi \cdot x, \xi)\; \d\xi.
\end{equation}

\subsection{Localization}
Now we consider the following question on the localization properties of the Radon transform: given a function $f$ can we reconstruct it in a compact region $\Omega$,  using only the tomograms of $f$ on  hyperplanes $\lambda$ that pass through  $\Omega$? By looking at the inversion formula (\ref{inversion formula}) one would naively think that the answer be positive. By following a nice argument given by John \cite{John}, we will show that in fact this cannot be true in two (and, in general, even) dimensions. The reason is the well-known fact that waves cannot be localized in two dimensions: a pebble dropped in a pond produces ripples that  propagate along expanding disks. On the other hand, in one dimension waves remain localized: a local perturbation of the pressure in an organ pipe propagates by translating its support. This is a fortunate accident that allows us to enjoy a Bach's fugue without hearing an endless annoying echo of each tune.

\begin{figure}[t]
\includegraphics[width=0.45\textwidth]{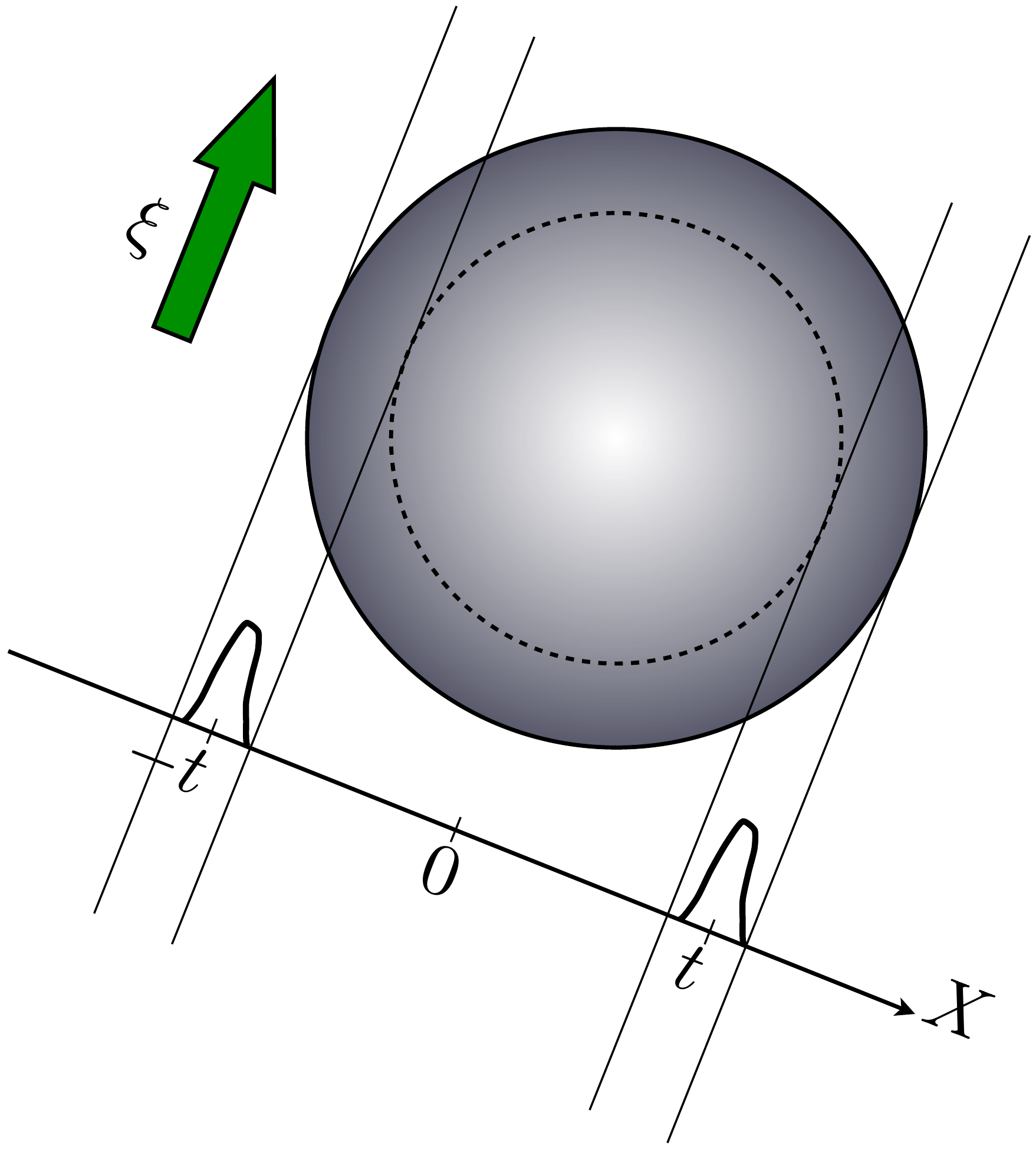}
\caption{A function $f$ supported in a ball and its tomogram in the direction $\xi$.}
\label{fig:localization}
\end{figure}

Consider the following Cauchy problem for the wave equation in $\RM^2$: 
\begin{equation}\label{wave equation}
\partial_{t}^2 f-\Delta f=0, \qquad
f(0,x)=g(x), \qquad
\partial_{t}f(0,x)=0,
\end{equation}
and assume that  the initial condition $g$ has its support in the ball $B(0,\epsilon)=\{|x|<\epsilon\}$ with center $0$ and radius $\epsilon>0$.
Let us fix a time $t>\epsilon$. The solution to this equation  is given by the celebrated Poisson's formula \cite{Evans}
\begin{equation}\label{soluz.wave eq 2d}
f(t,x)=\frac{1}{2\pi t}\int_{B(x,t)} \frac{g(y)+\nabla g(y) \cdot (y-x)}{\sqrt{t^2-|y-x|^2}}\; \d y ,
\end{equation}
where $\nabla$ denotes the gradient in $\RM^2$. Therefore the support of $f(t,\cdot)$ is a subset of $B(0,t+\epsilon)$.
On the other hand, if we consider the Radon transform of (\ref {wave equation}) we get, for every $\xi \in \SM^{n-1}$, a
Cauchy problem for the wave equation in $\RM$
\begin{equation}\label{wave equation RT}
\partial_{t}^2 f^{\sharp}_{\xi}- \partial_{X}^2 f^{\sharp}_{\xi}=0 ,
\qquad
f^{\sharp}_{\xi}(0,X)=g^{\sharp}_{\xi}(X) ,
\qquad
\partial_{t}f^{\sharp}_{\xi}(0,X)=0.
\end{equation}
Observe that $g^{\sharp}_{\xi}$ has its support in $(-\epsilon,\epsilon)$. The solution of Eq.~(\ref{wave equation RT}) is
\begin{equation}
f^{\sharp}_{\xi}(t,X)=\frac{1}{2}\left(g^{\sharp}_{\xi}(X-t)+g^{\sharp}_{\xi}(X+t)\right),
\end{equation}
thus the function $f^{\sharp}_{\xi}(t,\cdot)$ has its support in $(-t-\epsilon,-t+\epsilon)\cup (t-\epsilon,t+\epsilon)$.
So we see that if one wants to reconstruct the function $f(t,\cdot)$ in the set $B(0,t-\epsilon)$ one cannot consider only the tomograms on the lines that intersect $B(0,t-\epsilon)$, because they are all equal to zero.
See Fig.~\ref{fig:localization}.
Indeed, from $f^\sharp(t, X,\xi)=0$ for $|X|<t-\epsilon$ we have that
\begin{equation}
(f^\sharp)^\flat(t,x)= \int_{\SM}  f^\sharp(t,\xi \cdot x,\xi)\; \d\xi =0, \qquad \text{for}\quad |x|< t-\epsilon.
\end{equation}
On the other hand, the inversion formula (\ref{inversion formula}) for $n=2$ yields
\begin{equation}
f(t,x)  = \frac{1}{4\pi}(-\Delta)^{1/2}(f^\sharp)^\flat(t,x) \neq 0, \qquad \text{for}\quad |x|< t-\epsilon.
\end{equation}
 The seeming contradiction derives from the fact that the pseudodifferential operator $(-\Delta)^{1/2}$ is nonlocal. In fact, by (\ref{eq:fractionalLap}) it is an integral operator that depends also on the values of $(f^\sharp)^\flat(t,x)$ for $|x|> t-\epsilon$. Of course, this is a peculiar behavior of the even dimension. For odd $n$ in the inversion formula (\ref{inversion formula}) we have an integer power of the Laplacian that is a \textit{bona fide} differential operator and thus is local.

\section{$M^2$-transform}\label{sec:M^2}
In this section we introduce another covering of the homogeneous space $\PM^n$ that uses an additional parameter, which is absent in the classical tomographic framework of Radon. This choice of parametrization of $\PM^n$ gives an easier inversion formula, although this approach is completely equivalent to that considered in the previous section.

We define the following covering
\begin{equation}
\RM \times \RM^n_*
\to \PM^{n}, \qquad (X,\mu) \mapsto \lambda,
\end{equation}
where the Cartesian equation for $\lambda$ is
\begin{equation}
\lambda=\{x\in\RM^n \,|\, X-\mu \cdot x=0\}.
\end{equation}
Of course the additional parameter with respect to Radon is $|\mu|$.
Now, if $f \in \mathcal{S}(\RM^n)$ we can define the $M^2$-transform of $f$ as follows
\begin{equation}\label{M^2transform}
f^{M^2}(X,\mu)=\int_{\RM^n}  f(x)\, \delta(X-\mu \cdot x)\; \d x=\langle\delta(X-\mu \cdot x)\rangle_f.
\end{equation}
This tomographic mapping was introduced by Man'ko and Marmo, who
gave seminal contribution towards its significance
\cite{MarmoPhysScr,MarmoJPA,MarmoPL}.
Also for this transform we can define the tomograms, precisely for every $\mu \in \RM^n_*$ we define $f^{M^2}_{\mu}(X):=f^{M^2}(X,\mu)$. It easy to check that also this transform maps a probability density on $\RM^n$ into a family of probability density functions $\RM$, namely if $f \in \mathcal{S}(\RM^n)$ is a positive and normalized function, then for all $\mu \in \RM^n_*$, $f^{M^2}_{\mu}$ is positive and normalized.

Let us now consider the relations between Radon and $M^2$ transform.
Obviously, from
$f^{M^2}(X,\mu)$ one can immediately recover $f^{\sharp}(X,\xi)$
by setting $\mu=\xi\in \SM^{n-1}$:
\begin{eqnarray}
f^{\sharp}(X,\xi)=f^{M^2}(X,\xi).
\label{eq:ff0}
\end{eqnarray}
However, notice that, although $f^{\sharp}$ is the restriction of
$f^{M^2}$ to the unit sphere $\SM^{n-1}$, there is actually a
bijection between the two transforms. Therefore, they carry exactly
the same information. Indeed, since the Dirac distribution is
positive homogeneous of degree $-1$, i.e.\ $\delta(\alpha x)=
|\alpha|^{-1} \delta(x)$, for every $\alpha\neq 0$, one gets from
Eq.\ (\ref{M^2transform})
\begin{eqnarray}
f^{M^2}(X,\mu)=\frac{1}{|\mu|}f^{M^2}\left( \frac{X}{|\mu|},\frac{\mu}{|\mu|}
 \right),
\end{eqnarray}
for $\mu \neq 0$. In words, the tomogram $f^{M^2}(X,\mu)$ at a
generic point $\mu \in \RM^n_*$ is completely determined by the
tomogram at $\mu/|\mu| \in \SM^{n-1}$. But the latter is nothing but
the Radon transform, by Eq.\ (\ref{eq:ff0}). Therefore, we get the
bijection
\begin{eqnarray}
f^{\sharp}(X,\xi)=f^{M^2}(X,\xi), \qquad
f^{M^2}(X,\mu)=\frac{1}{|\mu|}f^{\sharp}\left( \frac{X}{|\mu|} ,\frac{\mu}{|\mu|}
 \right).
\label{eq:ff33}
\end{eqnarray}

Finally, let us look at an inversion formula as in the case of the Radon transform.
\begin{theorem}
Let $f \in \mathcal{S}(\RM^n)$. Then
\begin{equation}\label{M^2 inversion formula}
f(x)=\int_{\RM^{n+1}} f^{M^2}(X,\mu)\, \e^{\i(X-\mu \cdot x)} \; \frac{\d X \d\mu}{(2\pi)^n}.
\end{equation}
\end{theorem}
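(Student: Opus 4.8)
The plan is to reduce the claimed formula to the Fourier inversion theorem recalled in Eqs.~(\ref{fourier transform})--(\ref{inv fourier transform}), by performing the $X$-integration first. The key observation is that the partial Fourier transform of $f^{M^2}$ in its first variable reconstructs the full Fourier transform $\hat f$ along the ray through $\mu$. Concretely, starting from the definition (\ref{M^2transform}) and interchanging integrals (legitimate since $f\in\mathcal{S}(\RM^n)$ decays rapidly), I would compute
\begin{equation}
\int_{\RM} f^{M^2}(X,\mu)\, \e^{\i X}\; \d X = \int_{\RM^n}\d x\; f(x)\int_{\RM}\d X\;\e^{\i X}\,\delta(X-\mu\cdot x) = \int_{\RM^n} f(x)\,\e^{\i \mu\cdot x}\;\d x = \hat f(-\mu).
\end{equation}
This is the $M^2$-analogue of the Fourier slice theorem of Lemma~\ref{R-F}.

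Substituting this identity into the right-hand side of (\ref{M^2 inversion formula}) and carrying out the $X$-integration before the $\mu$-integration, I would obtain
\begin{equation}
\int_{\RM^{n+1}} f^{M^2}(X,\mu)\, \e^{\i(X-\mu\cdot x)}\;\frac{\d X\,\d\mu}{(2\pi)^n} = \int_{\RM^n}\hat f(-\mu)\,\e^{-\i\mu\cdot x}\;\frac{\d\mu}{(2\pi)^n}.
\end{equation}
The change of variables $\mu\mapsto -\mu$ (whose Jacobian has unit modulus) then turns the right-hand side into $\int_{\RM^n}\hat f(\mu)\,\e^{\i\mu\cdot x}\,\d\mu/(2\pi)^n = \check{\hat f}(x)=f(x)$, by the Fourier inversion formula (\ref{inv fourier transform}). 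This completes the argument.

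Alternatively, one can proceed by a direct distributional route: insert (\ref{M^2transform}) into (\ref{M^2 inversion formula}), carry out the $X$-integral against $\delta(X-\mu\cdot y)$ to set $X=\mu\cdot y$, and then perform the $\mu$-integration using the representation $\delta(z)=\int_{\RM^n}\e^{\i k\cdot z}\,\d k/(2\pi)^n$ recalled in the text, which produces $\delta(y-x)$ and collapses the remaining $y$-integral to $f(x)$.

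The only point requiring care—rather than a genuine obstacle—is the justification of the interchanges of integration. Since $f\in\mathcal{S}(\RM^n)$, its Fourier transform $\hat f$ is again Schwartz, so all the integrals appearing in the first route converge absolutely and Fubini's theorem applies without difficulty; the delta-function manipulations of the second route are then merely a convenient shorthand for these same rigorous steps.
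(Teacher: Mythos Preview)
Your proof is correct. Both your argument and the paper's ultimately reduce the inversion formula to Fourier inversion, but the executions differ. The paper interprets $f^{M^2}(X,\mu)$ as an $(n{+}1)$-dimensional inverse Fourier transform of an auxiliary function $g(\tau,y)=(2\pi)^n f(y/\tau)/|\tau|^n$, obtained after expanding the delta and performing the change of variables $y=\tau x$; inverting then yields $f(x)=g(-1,-x)/(2\pi)^n=\widehat{f^{M^2}}(-1,x)/(2\pi)^n$. You instead split the integration: the $X$-integral alone gives the $M^2$ analogue of the Fourier slice theorem, $\int f^{M^2}(X,\mu)\e^{\i X}\,\d X=\hat f(-\mu)$, after which the $\mu$-integral is plain $n$-dimensional Fourier inversion. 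Your route is somewhat more transparent, avoiding the auxiliary function and the scaling substitution, and it makes the connection with Lemma~\ref{R-F} explicit. One small caveat: the full $(n{+}1)$-dimensional integral is not absolutely convergent (since $\int_\RM|f^{M^2}(X,\mu)|\,\d X$ is bounded but not integrable in $\mu$), so strictly speaking Fubini does not apply to the combined integral; rather, the formula is to be read as the iterated integral in the order you take it, which is exactly how the paper's own formal manipulations should be interpreted.
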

\begin{proof}
Observe that
\begin{eqnarray}
& & f^{M^2}(X,\mu)  =   \int_{\RM^n} \d x\; f(x)\, \delta(X-\mu \cdot x)
= \int_{\RM^n} \d x\; f(x) \int_{\RM}\frac{\d \tau}{2\pi}\; \e^{\i\tau(X-\mu \cdot x)} \nonumber\\
&  & \quad = \int_{\RM^{n+1}}\frac{\d y\, \d \tau}{(2\pi)^{n+1}}\; \e^{\i(\tau X-\mu \cdot y)}\,  f \left(\frac{y}{\tau}\right)\frac{(2\pi)^{n}}{|\tau|^n}
=  \int_{\RM^{n+1}}\frac{\d y\, \d \tau}{(2\pi)^{n+1}}\; \e^{\i(\tau X-\mu \cdot y)} g(\tau,y),\qquad
\end{eqnarray}
where $g(\tau,y)= (2\pi)^{n}f(y/\tau)/ |\tau|^n $. Therefore,
$f^{M^2}(X,\mu) = \check{g}(X,-\mu)$,
and so
\begin{equation}
g(\tau,x)=\widehat{f^{M^2}}(\tau,-x).
\end{equation}
Finally,
\begin{equation}
f(x)=\frac{g(-1,-x)}{(2\pi)^{n}}=\frac{\widehat{f^{M^2}}(-1,x)}{(2\pi)^{n}}=\int_{\RM^{n+1}}\frac{\d X\,  \d\mu}{(2\pi)^n}\;f^{M^2}(X,\mu)\, \e^{\i(X-\mu \cdot x)}.
\end{equation}
\end{proof}

Summarizing, the inversion formulae for the transforms (\ref{inversion formula}) and
(\ref{M^2 inversion formula}) read
\begin{eqnarray}
f(x)  &=&  \frac{1}{2(2 \pi)^{n-1}}
(-\Delta)^{\frac{n-1}{2}}\int_{\SM^{n-1}} f^{\sharp}(\xi \cdot x,\xi)\; \d\xi,
\label{eq:geninvrad}
\\
f(x) &=&  \int_{\mathbb{R}^{n+1}} f^{M^2}(X,\mu)\,
\e^{\i(X-\mu \cdot x)}\; \frac{\d X \d\mu}{(2\pi)^n},
\label{eq:geninvsym}
\end{eqnarray}
respectively. While formula (\ref{eq:geninvsym}), which is nothing
but a Fourier transform, is quite easy to handle, the inversion
formula (\ref{eq:geninvrad}) is in general very hard to tackle,
especially for even $n$, due to the presence of a fractional
Laplacian. For an explicit example, see \cite{m2}.

\section{Group-theoretic framework}
The general problem posed by the Radon transform is that of determining a function on a manifold by means of its integrals over a certain family of submanifolds. Here, following Helgason \cite{Helgason} we will construct a quite general framework for the above problem.

Let us start from the simple observation that the Euclidean group $E(n)$ acts transitively in a natural way both on the affine space $\RM^n$ and on the hyperplane space $\PM^n$. The Euclidean group is the group of isometries associated with the Euclidean metric  and is the semidirect product  of the orthogonal group $O(n)$ of rotations, extended by the abelian group of translations $\RM^n$, namely  $E(n)=O(n)\ltimes\RM^n$.

Fix an origin $0$ in $\RM^n$ (a reference point) and an origin $\lambda_0$ in $\PM^n$ (a reference hyperplane). It will be convenient to choose an hyperplane $\lambda_0$ passing through the origin $0$.
Transitivity means that any element of $\RM^n$ ($\PM^n$) can be obtained by an Euclidean motion of the origin $0$ ($\lambda_0$), namely
\begin{eqnarray}
& &\forall x \in\RM^n :\; x= g \cdot 0, \quad \text{for some}\; g\in E(n), \nonumber\\
& &\forall \lambda \in\PM^n :\; \lambda= \gamma \cdot \lambda_0, \quad \text{for some}\; \gamma\in E(n),
\end{eqnarray}
where the dot $\cdot$ denotes the action of $E(n)$. The action is not free, i.e.\ one can have $g\cdot x = h\cdot x$ for some $x\in \RM^n$ and $g,h\in E(n)$ with $g\neq h$. Therefore, let us consider the isotropy subgroup of the origin $0$ (also called the stabilizer subgroup) as the set of all elements in $E(n)$ that fix $0$:
\begin{equation}
E(n)_{0}=\left\{ g \in E(n) \,|\, g\cdot 0=0  \right\} = O(n).
\end{equation}
It is given by the rotation group (around $0$). On the other hand,
the isotropy subgroup of the reference hyperplane $\lambda_{0}$ passing through the origin is
\begin{equation}
E(n)_{\lambda_0}=\left\{g \in E(n) \left| \right. g \cdot \lambda_{0}=\lambda_{0}  \right\} =\ZM_2 \times E(n-1) ,
\end{equation}
where $E(n-1)$ is the Euclidean group of the $(n-1)$-dimensional affine subspace $\lambda_0$ (translations and rotations in $\lambda_0$), while $\ZM_2\simeq O(1)$ is the group whose elements are the identity and the reflection in $\lambda_0$. See Fig. \ref{fig:isotropy}.
\begin{figure}[t]
\includegraphics[width=0.45\textwidth]{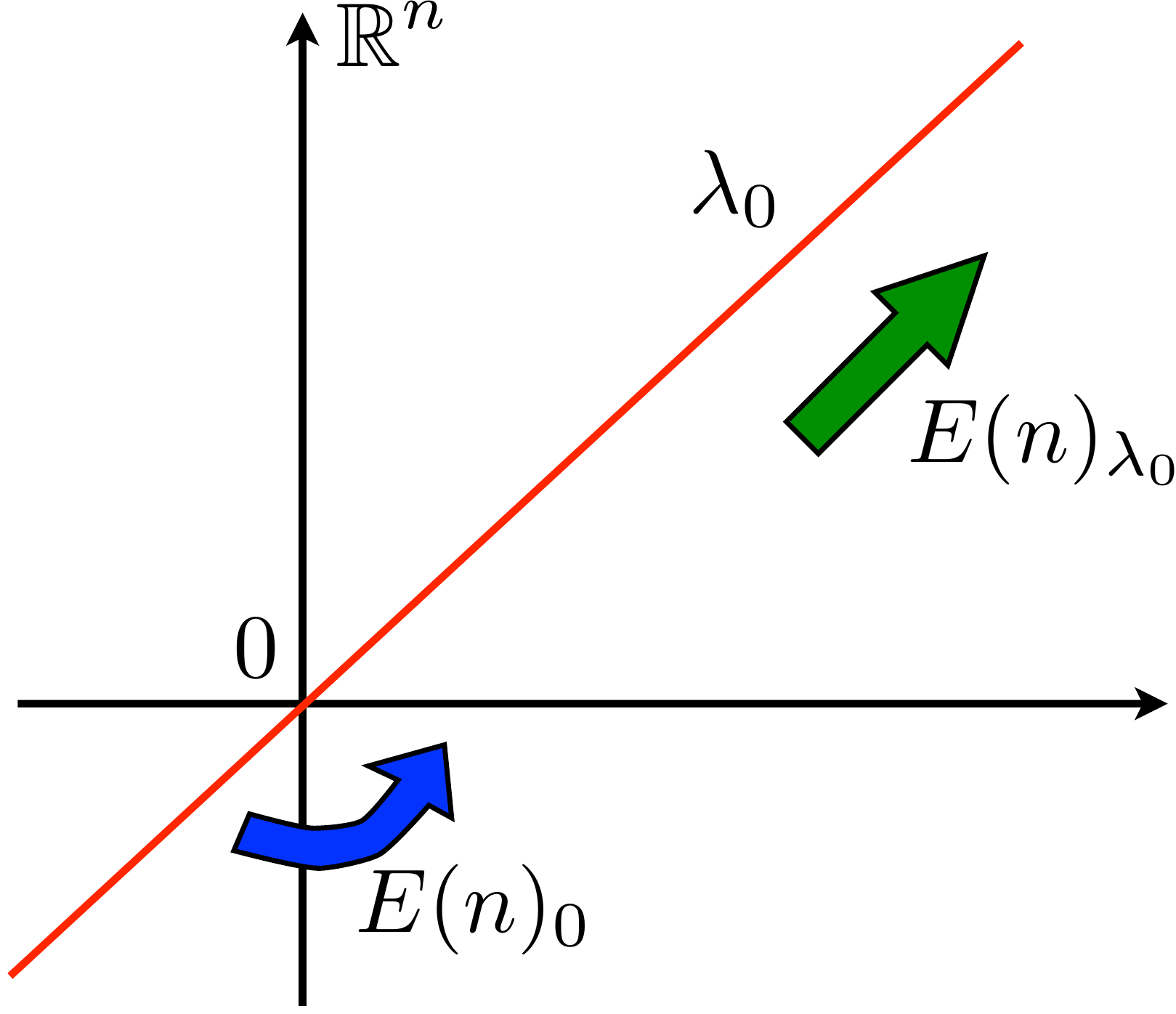}
\caption{Origins $0$ and  $\lambda_0$ and their isotropy groups.}
\label{fig:isotropy}
\end{figure}
Now, given an element $x\in \RM^n$, by transitivity we get $x=g\cdot 0$, for some $g\in E(n)$. Therefore we can identify $x$ with the left coset $g E(n)_0 = g  O(n)$. Analogously, for all $\lambda\in\PM^n$ we get the identification $\lambda \simeq \gamma E(n)_{\lambda_0}= \gamma (\ZM_2 \times E(n-1))$, for some $\gamma\in E(n)$. Therefore, we get that
\begin{eqnarray}
\RM^n &\simeq& E(n)/E(n)_{0} =E(n)/O(n),  \nonumber\\
\PM^n  &\simeq& E(n)/E(n)_{\lambda_0}=E(n)/(\ZM_2 \times E(n-1)),
\end{eqnarray}
are homogeneous spaces of $E(n)$.

We want to describe the incidence relation in this framework.  Namely, given an hyperplane $\lambda\in\PM^n$ and a point $x \in \RM^n$, when $x$ lies in $\lambda$, that is $x \in \lambda$? It is not difficult to see that this is true if and only if their corresponding cosets, considered as subsets of $E(n)$, have a point in common.
In equations, given $x \simeq g O(n)$ and $\lambda \simeq \gamma (\ZM_2 \times E(n-1))$, one has
\begin{equation}
x \in \lambda \qquad \Leftrightarrow \quad
g O(n) \cap \gamma (\ZM_2 \times E(n-1)) \neq \emptyset .
\label{eq:incidence0}
\end{equation}
Indeed, one gets
\begin{eqnarray}
g\cdot 0 \in \gamma \cdot\lambda_0 & \Leftrightarrow & g\cdot 0 = \gamma h \cdot 0, \quad \text{for some } h\in \ZM_2 \times E(n-1)  \nonumber\\
& \Leftrightarrow & g k = \gamma h, \quad \text{for some } k\in O(n)
\nonumber\\
& \Leftrightarrow &  g O(n) \cap \gamma (\ZM_2 \times E(n-1)) = \{ g k \}.
\label{eq:incidenceproof}
\end{eqnarray}

In the Cartesian product $\RM^n \times \PM^n$ we can  consider the following set, called the \emph{incidence relation},
\begin{equation}
I=\left\{(x, \lambda ) \in \RM^n \times \PM^n \left|\right. x \in \lambda \right\}.
\label{eq:incidence1}
\end{equation}
Consider now the restrictions to $I$ of the natural projections $p:I\subset \RM^n \times \PM^n\to\RM^n$ and $\pi: I\subset \RM^n \times \PM^n\to\PM^n$. Obviously, since every point of $\RM^n$ lies in some hyperplane of $\PM^n$ and every hyperplane passes through some point, we get $p(I)=\RM^n$ and $\pi(I)=\PM^n$, so that $p$ and $\pi$ are onto. Now, for any $x \in \RM^n$ and $\lambda \in \PM^n$ consider the sets
\begin{equation}
x^{\flat} = \left\{\lambda \in \PM^n  \left|\right. (x , \lambda) \in I \right\}, \qquad
\lambda^{\sharp} = \left\{ x \in \RM^n \left|\right. (x , \lambda) \in I \right\}.
\end{equation}
They represent  the set of all hyperplanes passing through $x$ and the set of all points lying in $\lambda$, respectively. The crucial fact is that the mappings
\begin{equation}
x\mapsto x^\flat=\pi\circ p^{-1}(x), \qquad \lambda\mapsto \lambda^\sharp = p\circ\pi^{-1}(\lambda),
\end{equation}
are in fact injective. Indeed, we obviously have
\begin{equation}
x= \bigcap_{\lambda \in x^\flat} \lambda, \qquad \lambda= \bigcup_{x\in \lambda^\sharp} x.
\end{equation}
Therefore, any point $x\in\RM^n$ can be uniquely identified with a set of hyperplanes
$x^\flat\subset\PM^n$ (those incident to $x$), and conversely any hyperplane $\lambda\in\PM^n$ can be uniquely identified with a set of points  $\lambda^\sharp\subset\RM^n$ (those incident to $\lambda$).
We say that $\RM^n$ and $\PM^n$ are \emph{homogeneous spaces in duality}. In this setting
we can rewrite the Radon transform and its dual as follows
\begin{equation}
f^{\sharp}(\lambda)=\int_{\lambda^{\sharp}} f(x) \; \d m(x)  , \qquad
\varphi^{\flat}(x)=\int_{x^{\flat}}  \varphi(\lambda) \; \d\mu(\lambda).
\label{eq:Radon+dual}
\end{equation}

Notice that from Eq.~(\ref{eq:incidenceproof}) we get for any pair $(x,\lambda)\in I$
\begin{equation}
(x,\lambda)\simeq \left(g O(n), g k (\ZM_2 \times E(n-1))\right) = \left(g l O(n), g k l (\ZM_2 \times E(n-1))\right),
\end{equation}
for some $k \in O(n)$   and for all $l\in L$, where
\begin{equation}
L=E(n)_0\cap E(n)_{\lambda_0} = O(n)\cap (\ZM_2 \times E(n-1)) = \ZM_2 \times O(n-1)
\end{equation}
is the subgroup that fixes both $0$ and $\lambda_0$.
Therefore, we get that also the incidence relation $I$ is a homogeneous space, namely
\begin{equation}
I\simeq E(n)/L = E(n)/(\ZM_2 \times O(n-1)).
\end{equation}
The Radon transform on the homogeneous spaces in duality $\RM^n$ and $\PM^n$ can be conveniently described  by the following double fibration:
\begin{equation}
\begin{diagram}
\node[2]{I\simeq E(n)/(\ZM_2 \times O(n-1))} \arrow{se,t}{\pi} \arrow{sw,t}{p}
\\
\node{\RM^n\simeq E(n)/O(n)}  \node[2]{\!\!\!\!\!\!\!\!\PM^n\simeq E(n)/(\ZM_2 \times E(n-1))}
\end{diagram}
\label{eq:doublefibration1}
\end{equation}

Now let us generalize the above construction. Let $G$ be a locally compact group, let $K$ and $H$ be two closed subgroups of $G$, and let $L=K\cap H$. Assume that $KH\subset G$ is closed.
Let us consider the following two left coset spaces
\begin{equation}
X=G/K \qquad \mathrm{and} \qquad \Lambda=G/H,
\end{equation}
and define the incidence relation
\begin{equation}
I=\left\{(x, \lambda ) \in X \times \Lambda \left|\right. x \cap \lambda \neq \emptyset \right\} .
\end{equation}
Define also for all $x \in X$ and $\lambda \in \Lambda$
\begin{equation}
x^{\flat}=\pi\circ p^{-1}(x)=\left\{\lambda \in \Lambda  \left|\right. (x , \lambda) \in I \right\}, \quad
\lambda^{\sharp}=p\circ \pi^{-1}(\lambda)=\left\{ x \in X \left|\right. (x , \lambda) \in I \right\},
\end{equation}
where $p$ and $\pi$ are the restrictions to $I$ of the natural projections of $X\times \Lambda$.
It is not difficult to prove that  $x^\flat$ and $\lambda^\sharp$ are closed subsets, and that the sets
\begin{equation}
K_{H}=\left\{k \in K \left| \right. kH \cup k^{-1}H \subset HK\right\},
\qquad
H_{K}=\left\{h \in H \left| \right. hK \cup h^{-1}K \subset KH\right\}
\end{equation}
are subgroups of $K$ and $H$, respectively.
Moreover, Helgason \cite{Helgason} proved the following
\begin{theorem}\label{inters.}
The following propositions are equivalent:
\begin{enumerate}
  \item
\begin{equation}
L=K_{H}=H_{K}
\label{eq:transversalitycond}
\end{equation}
  \item The mappings
\begin{equation}
x \mapsto x^\flat, \qquad \lambda \mapsto \lambda^{\sharp}
\end{equation}
are injective.
\end{enumerate}
\end{theorem}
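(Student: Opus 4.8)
The plan is to use the left $G$-equivariance of both maps to reduce each injectivity statement to the computation of a single stabilizer at a base point, and then to recognize these stabilizers as $K\,H_K$ and $H\,K_H$. This turns the two injectivity conditions into the pair of equalities in proposition~1.

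First I would fix the base points $x_0 = eK \in X$ and $\lambda_0 = eH \in \Lambda$ and check that $\flat$ and $\sharp$ intertwine the natural $G$-actions: directly from the incidence relation, $x = g\cdot x_0$ gives $x^\flat = g\cdot x_0^\flat$, and $\lambda = g\cdot\lambda_0$ gives $\lambda^\sharp = g\cdot\lambda_0^\sharp$. Since $K$ fixes $x_0$ it fixes $x_0^\flat$, so $\flat$ is well defined on $G/K$ and its image is the orbit of $x_0^\flat$, i.e.\ $G/\mathrm{Stab}(x_0^\flat)$ with $\mathrm{Stab}(x_0^\flat)=\{g\in G : g\cdot x_0^\flat = x_0^\flat\}\supseteq K$. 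The canonical projection $G/K\to G/\mathrm{Stab}(x_0^\flat)$ is injective precisely when $\mathrm{Stab}(x_0^\flat)=K$; hence $\flat$ is injective iff $\mathrm{Stab}(x_0^\flat)=K$, and symmetrically $\sharp$ is injective iff $\mathrm{Stab}(\lambda_0^\sharp)=H$.

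Next I would make the base fibres explicit. Writing $\lambda=\gamma H$, the incidence $(x_0,\lambda)\in I$ means $K\cap\gamma H\neq\emptyset$, i.e.\ $\gamma\in KH$; thus $x_0^\flat$ is the image of $KH$ in $G/H$, and likewise $\lambda_0^\sharp$ is the image of $HK$ in $G/K$. Since $KH$ is saturated under right multiplication by $H$, the equality of orbits $g\cdot x_0^\flat=x_0^\flat$ is the same as the set equality $gKH=KH$, so $\mathrm{Stab}(x_0^\flat)=\{g\in G:gKH=KH\}$, and similarly $\mathrm{Stab}(\lambda_0^\sharp)=\{g\in G:gHK=HK\}$.

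The crux, and the step I expect to require the most care, is to identify these left stabilizers with $K\,H_K$ and $H\,K_H$. Because $e\in KH$, any $g$ with $gKH=KH$ lies in $KH$; writing $g=kh$ with $k\in K$, $h\in H$ reduces $gKH=KH$ to $hKH=KH$. Using once more that $KH$ is right-$H$-saturated, $hKH=KH$ is equivalent to the two inclusions $hK\subseteq KH$ and $h^{-1}K\subseteq KH$, that is to $h\in H_K$; this gives $\mathrm{Stab}(x_0^\flat)=K\,H_K$, and the symmetric argument gives $\mathrm{Stab}(\lambda_0^\sharp)=H\,K_H$. Finally, using the already noted facts that $H_K$ and $K_H$ are subgroups and that $L\subseteq H_K\subseteq H$ and $L\subseteq K_H\subseteq K$, one has $K\,H_K=K\Leftrightarrow H_K\subseteq L\Leftrightarrow H_K=L$, and likewise $H\,K_H=H\Leftrightarrow K_H=L$. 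Therefore $\flat$ is injective iff $H_K=L$ and $\sharp$ is injective iff $K_H=L$, so both are injective exactly when $L=K_H=H_K$, which is the claimed equivalence. The standing assumption that $KH$ is closed is used only to ensure that $x_0^\flat$ and $\lambda_0^\sharp$ are closed subsets, as in the remark preceding the statement; the injectivity equivalence itself is purely group-theoretic.
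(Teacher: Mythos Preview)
The paper does not give its own proof of this theorem; it merely attributes the result to Helgason~\cite{Helgason} and states it without argument. Your proposal is a correct and complete proof, and it is in fact the standard argument one finds in Helgason's book: reduce by $G$-equivariance to the stabilizers of the base fibres $x_0^\flat$ and $\lambda_0^\sharp$, identify these fibres with the images of $KH$ and $HK$ in $G/H$ and $G/K$, and show that the left stabilizer of $KH$ equals $K\,H_K$ (and symmetrically). Your handling of the key step, passing from $gKH=KH$ to $g\in K\,H_K$ by writing $g=kh$ and reading off $hK\cup h^{-1}K\subseteq KH$, is exactly the intended computation, and the final reduction $K\,H_K=K\Leftrightarrow H_K\subseteq K\cap H=L$ is clean. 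So there is nothing to compare against in the paper itself, but your argument matches the source it cites.
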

Therefore, under the transversality assumption (\ref{eq:transversalitycond}), any element $x$ of $X$ can be uniquely identified with the subset $x^\flat$ of $\Lambda$, and any element $\lambda\in\Lambda$ can be uniquely identified with the subset $\lambda^\sharp\subset X$. Thus, $X$ and $\Lambda$ are homogeneous spaces  in duality. Exactly as in the case of hyperplanes in $\RM^n$, we have the identification $I=G/L$, with $L=K\cap H$, that is easily proved via the bijection $G/L\to I : g L \mapsto (gK,gH)$. As a consequence, also in the general case we have the following double fibration, which is the generalization of (\ref{eq:doublefibration1}):
\begin{equation}
\begin{diagram}
\node[2]{I =G/(K\cap H)} \arrow{se,t}{\pi} \arrow{sw,t}{p}
\\
\node{X=G/K}  \node[2]{\Lambda=G/H}
\end{diagram}
\label{eq:doublefibration}
\end{equation}
Let $x_0$ and $\lambda_0$ be the origins of $X$ and $\Lambda$, respectively. Let us assume that the sets $x_0^\flat=K/L$ and $\lambda_0^\sharp=H/L$ have positive measures invariant under $K$ and $H$, respectively. Under the transversality assumption (\ref{eq:transversalitycond}), by using the action of $G$ one can uniquely induce a $K$-invariant  ($H$-invariant) positive measure $\mu$ ($m$) on each $x^\flat$ ($\lambda^\sharp$).
Therefore, one can define a Radon transform and its dual as in Eq.~(\ref{eq:Radon+dual}) for the double fibration~(\ref{eq:doublefibration}).

A final remark is in order. The above construction in terms of homogeneous spaces in duality is quite suggestive and represents a good framework for encompassing various examples of tomographic mappings. However, as noted by Helgason himself \cite{Helgason}, it is doubtful that it could capture the large variety of results in this field. In particular, there remain several open problems that should be addressed case by case. In particular, one should find suitable  function spaces on $X$ and $\Lambda$ where the Radon transform $f\mapsto f^\sharp$ is invertible, and -- an even more difficult task -- find an inversion formula.

\section{Examples}

In this section we consider two examples of tomographics mappings a functions$f$ on $\RM^n$, in terms of its integral over certain submanifolds of $\RM^n$ and prove the related inversion formulae.  In the first example we consider the planes of codimension $d$, $0<d<n$,  in $\RM^n$, while in the second one we consider the unit spheres in $\RM^3$.

\subsection{(n-d)-planes in $\RM^n$}
Let us consider the space $\PM^n_d$ of all the affine subspaces $\lambda $ of $\RM^n$ with dimension $n-d$, $0 < d < n$. In this example the homogeneous spaces in duality are $\RM^n$ and $\PM^n_d$ and the double fibration in Eq. (\ref{eq:doublefibration}) reads
\begin{equation}
\begin{diagram}
\node[2]{I= E(n)/(O(d) \times O(n-d))} \arrow{se,t}{\pi} \arrow{sw,t}{p}
\\
\node{\RM^n = E(n)/O(n)}  \node[2]{\!\!\!\!\!\!\!\!\!\!\!\!\!\!\!\!\!\PM^n_d = E(n)/(O(d) \times E(n-d))}
\end{diagram}
\label{eq:doublefibrationspheres}
\end{equation}
Indeed,  $G=E(n)$ and $K=O(n)$, the isotropy group of the origin $0$, hence $X=\RM^n=E(n)/O(n)$. Moreover, the isotropy group of  a reference subspace $\lambda_0 \in \PM^n_d$ passing trough the origin is $H= O(d) \times E(n-d)$,
and thus $\Lambda=\PM^n_d=E(n)/ (O(d) \times E(n-d))$. Finally, $L=K\cap H=
O(d) \times O(n-d)$, so that $I=E(n)/(O(d) \times O(n-d))$.

The Radon transform has the same form~(\ref{sharp lambda1}) of the codimension-one case,
where $\d m(x)$ is the Euclidean measure on the codimension-$d$ hyperplane $\lambda$. First of all notice that, since a codimension-one hyperplane can be viewed as a disjoint union of parallel codimension-$d$ planes, it is obvious that the transform $f\mapsto f^\sharp$ is injective. In other words, the information recorded is redundant and we can recover the function $f$ by integrating away $d-1$ degrees of freedom and then inverting the usual Radon transform on hyperplanes. We will follow here a more direct, different approach.

Observe that $\lambda$ can be obtained by the intersection of $d$ orthogonal hyperplanes in $\RM^n$, namely $\lambda$ can be described as the space of the solutions of the following linear system
\begin{equation}
\label{system}
\lambda =\{ x\in\RM^n \,|\, X_j=\xi_j \cdot x, \; j=1, \dots, d\}.
\end{equation}
where $X_j \in \RM$ and $\xi_j \in \SM^{n-1}$, $\xi_j \cdot \xi_k=\delta_{j,k}$,   with $j,k =1,\dots, d$. Of course, this representation of $\lambda$ is not unique, in fact we can obtain $\lambda$ as the solution of another linear system
\begin{equation}\label{system2}
\lambda =\{ x\in\RM^n \,|\, Y_j=\eta_j \cdot x, \; j=1, \dots, d\}.
\end{equation}
where for all $i=1,\dots, d$
\begin{equation}\label{gauge}
\eta_{i}=\sum_{j=1}^d a_{i j }\xi_{j},
\qquad Y_{i}=\sum_{j=1}^d a_{i j}X_{j},
\end{equation}
and $A=(a_{i j})
\in GL(d,\RM)$.
Let us compute
\begin{equation}
\eta_{i}\cdot \eta_j=\sum_{k,l=1}^d a_{i k }a_{j l }\xi_{k}\cdot \xi_{l}=(A A^T)_{i j}.
\end{equation}
Note that, since  $\eta_i \in \SM^{n-1}$, one must have that  $(A A^T)_{i i}=1$ for all  $i=1,\dots, d$.

Let $f \in \mathcal{S}(\RM^n)$ and define the tomographic transform of $f$ on $\lambda \in \PM^n_d$ as follows (with the usual abuse of notation)
\begin{equation}\label{radon n-d}
f^\sharp (\lambda)=f^\sharp
(Y,\eta) =\int_{\RM^n}f(x)\, \prod_{j=1}^{d}\delta(Y_j-\eta_j \cdot x)\, \sqrt{\det
(\eta_k \cdot \eta_l)
}\; \d x,
\end{equation}
where $Y=(Y_1, \dots, Y_d)\in \RM^d$ and $\eta=(\eta_1, \dots, \eta_d)\in(\SM^{n-1})^d$.
The definition of the Radon transform in (\ref{radon n-d}) does not depend on the parametrization of $\lambda$. Indeed,
by writing the parameterization $(Y,\eta)$ in terms of a parameterization with orthogonal hyperplanes $(X,\xi)$ through (\ref{gauge}), we get
\begin{eqnarray}
& & \prod_{j=1}^{d}\delta(Y_j-\eta_j \cdot x)\, \sqrt{\det (\eta_k \cdot \eta_l)}  =
\int_{\RM^d} \e^{\i \sum_j t_j (Y_j-\eta_j\cdot x)} \left| \det A\right| \; \frac{\d t}{(2\pi)^d}
\nonumber \\
&  & \quad  =
\int_{\RM^d} \e^{\i \sum_j t_j \sum_l a_{jl}(X_l-\xi_l\cdot x)} \left| \det A\right| \; \frac{\d t}{(2\pi)^d}
= \int_{\RM^d} \e^{\i \sum_l \left( \sum_j a_{jl} t_j \right) (X_l-\xi_l\cdot x)} \left| \det A\right| \; \frac{\d t}{(2\pi)^d}
\nonumber \\
&  & \quad  =
\int_{\RM^d} \e^{\i \sum_l s_l (X_l-\xi_l\cdot x)}  \; \frac{\d s}{(2\pi)^d}
= \prod_{l=1}^{d}\delta(X_l-\xi_l \cdot x) .
\end{eqnarray}
where we set $s=A^T t$. Therefore,
\begin{equation}
f^\sharp (Y,\eta) =f^\sharp (X,\xi)= \int_{\RM^n}f(x)\, \prod_{j=1}^{d}\delta(X_j-\xi_j \cdot x) \; \d x
=\int_{\lambda} f(x)\; \d m(x) = f^{\sharp}(\lambda),
\end{equation}
as given in (\ref {sharp lambda1}).

Let us consider also the dual of the Radon transform. If $g \in \mathcal{S}(\PM^n_d)$, we define
\begin{equation}
g^\flat(x)=\int_{(\RM \times \SM^{n-1})^d}g(Y,\eta)\, \prod_{j=1}^{d}\delta(Y_j-\eta_j \cdot x) \;\frac{\d Y\, \d\eta} {\sqrt{\det
(\eta_k \cdot \eta_l )
}} .
\end{equation}
\begin{theorem}
Let $f \in \mathcal{S}(\RM^n)$. Then
\begin{equation}\label{inversion formula d}
f(x)=c_{n,d}(-\Delta)^{\frac{n-d}{2}}(f^\sharp)^\flat(x),
\qquad \text{where} \quad
c_{n,d}=\frac{\Gamma\left(\frac{n-1}{2}\right)^d \Gamma\left(\frac{d}{2}\right)}{2^n \pi^{(dn-d+n)/2}\Gamma\left(\frac{n-d}{2}\right)},
\end{equation}
with $\Gamma$ denoting the Euler gamma function.
\end{theorem}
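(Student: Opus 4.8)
The plan is to mirror exactly the three-lemma strategy used for the codimension-one inversion formula, replacing the Coulomb-type kernel $1/|x|$ of Lemma~\ref{sharp flat} by the Riesz kernel $1/|x|^d$. First I would compute the composition $(f^\sharp)^\flat$ explicitly. Substituting the definitions of the codimension-$d$ transform and of its dual, the two factors $\sqrt{\det(\eta_k\cdot\eta_l)}$ cancel, and integrating out the $Y$-variables against the product of delta functions collapses everything to
\[(f^\sharp)^\flat(x)=\int_{\RM^n}\d y\, f(y)\int_{(\SM^{n-1})^d}\d\eta\,\prod_{j=1}^d\delta\bigl(\eta_j\cdot(x-y)\bigr).\]
Because the $\eta_j$ range independently over the sphere, the angular integral factorizes into $d$ identical copies of $\int_{\SM^{n-1}}\delta(\eta\cdot z)\,\d\eta$, each of which, after the rescaling $t\mapsto t/|z|$ in the Fourier representation $\delta(\eta\cdot z)=\int\frac{\d t}{2\pi}\e^{\i t\eta\cdot z}$, equals $a_n/|z|$ with $a_n$ the very constant \eqref{eq:andef} of Lemma~\ref{sharp flat}. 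This yields the codimension-$d$ analogue of Lemma~\ref{sharp flat},
\[(f^\sharp)^\flat=a_n^{\,d}\, f\ast V_d,\qquad V_d(x)=\frac{1}{|x|^d},\]
so that the only change from the hyperplane case is that both the kernel exponent and the power of $a_n$ are raised to $d$.

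Second, following Lemma~\ref{lm:hatV} verbatim, the scaling $y=|k|x$ gives $\hat V_d(k)=b_{n,d}/|k|^{n-d}$ with $b_{n,d}=\int_{\RM^n}|x|^{-d}\e^{-\i\omega\cdot x}\,\d x$ independent of the unit vector $\omega$. Applying the fractional Laplacian \eqref{eq:fractionalLap} of order $(n-d)/2$ then turns $V_d$ into $b_{n,d}\delta$, exactly as $(-\Delta)^{(n-1)/2}V=b_n\delta$ in the codimension-one proof. Combining with the first step gives
\[(-\Delta)^{\frac{n-d}{2}}(f^\sharp)^\flat=a_n^{\,d}\,f\ast(-\Delta)^{\frac{n-d}{2}}V_d=a_n^{\,d}b_{n,d}\,f,\]
so that $c_{n,d}=1/(a_n^{\,d}b_{n,d})$ and the structural part of the theorem is complete.

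The remaining, and only genuinely computational, task is to evaluate the two constants in closed form and verify that their product reproduces $1/c_{n,d}$. For $a_n$ I would note that the inner $t$-integral in \eqref{eq:andef} is $\delta(\eta\cdot\omega)$, so $a_n=\int_{\SM^{n-1}}\delta(\eta\cdot\omega)\,\d\eta$ is just the volume of the equatorial subsphere $\SM^{n-2}$, namely $a_n=2\pi^{(n-1)/2}/\Gamma\!\bigl(\tfrac{n-1}{2}\bigr)$. For $b_{n,d}$ I would use the Gamma-integral representation $|x|^{-d}=\Gamma(\tfrac d2)^{-1}\int_0^\infty t^{d/2-1}\e^{-t|x|^2}\,\d t$, perform the resulting Gaussian integral, and evaluate the leftover one-dimensional integral by the substitution $u=1/(4t)$, obtaining $b_{n,d}=2^{n-d}\pi^{n/2}\Gamma\!\bigl(\tfrac{n-d}{2}\bigr)/\Gamma\!\bigl(\tfrac d2\bigr)$; this is where the hypothesis $0<d<n$ enters, guaranteeing convergence at both ends. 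Multiplying $a_n^{\,d}$ by $b_{n,d}$ and collecting the powers of $2$, of $\pi$, and the Gamma factors then produces precisely $1/c_{n,d}$.

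The main obstacle is purely this bookkeeping of Gamma functions: getting the closed form of the Riesz-kernel Fourier transform $b_{n,d}$ right and correctly tracking the exponent $(dn-d+n)/2$ of $\pi$ that arises from combining $\pi^{d(n-1)/2}$ (from $a_n^{\,d}$) with $\pi^{n/2}$ (from $b_{n,d}$). The geometric and analytic skeleton is otherwise identical to the codimension-one proof already given, the key structural observation being that passing from hyperplanes to $(n-d)$-planes simply replaces $(a_n,V,(-\Delta)^{(n-1)/2})$ by $(a_n^{\,d},V_d,(-\Delta)^{(n-d)/2})$.
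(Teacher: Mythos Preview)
Your proposal is correct and follows essentially the same route as the paper's proof: both establish $(f^\sharp)^\flat=a_n^{\,d}\, f\ast V^d$ and $(-\Delta)^{(n-d)/2}V^d=b_{n,n-d}\,\delta$, then conclude $c_{n,d}=1/(a_n^{\,d}\,b_{n,n-d})$. The only cosmetic differences are that the paper quotes the Riesz-kernel Fourier transform $\widehat{V^{n-\alpha}}=b_{n,\alpha}V^\alpha$ from \cite{Lieb} rather than deriving it via the Gamma-integral trick, and evaluates $a_n$ through $b_{n,1}$ rather than directly as the volume of $\SM^{n-2}$; your notation $b_{n,d}$ is the paper's $b_{n,n-d}$, and the constants agree.
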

\begin{proof}
First of all observe that if $n=1$ then necessarily $d=1$, and the statement (\ref{inversion formula d}) is trivial. So assume that $n \geq1$.
Let $0< \alpha < n$, then if $V(x)=1/|x|$, with $x\in\RM^n_*$, one can easily prove that \cite{Lieb}
\begin{equation}
\widehat{V^{n-\alpha}}=b_{n,\alpha}V^{\alpha},
\qquad
\text{where}
\quad b_{n,\alpha}=2^{\alpha}\pi^{n/2}\frac{\Gamma\left(\frac{\alpha}{2}\right)}{\Gamma\left(\frac{n-\alpha}{2}\right)}.
\end{equation}
Let $p>0$ and let us compute the following quantity
\begin{equation}
(-\Delta)^p V^d(x)= \int_{\RM^n}\frac{\d k}{(2\pi)^n}\; |k|^{2p}\, \widehat{V^d}(k)\, \e^{\i k \cdot x}
\end{equation}
so if $2p=n-d$, we obtain
\begin{equation}
(-\Delta)^{\frac{n-d}{2}} V^d(x)=b_{n,n-d}\,\delta(x).
\end{equation}
Now let us compose the Radon transform with its dual. As in Lemma~\ref{sharp flat} we obtain
\begin{equation}
(f^\sharp)^\flat(x)=a_n^d\, (f \ast V^d)(x),
\qquad
a_n=\int_{\RM}\frac{\d s}{2\pi}\int_{\SM^{n-1}} \d\xi\; \e^{\i s\xi \cdot \omega}=\frac{2 b_{n,1}}{2\pi}=\frac{2 \pi^{\frac{n-1}{2}}}{\Gamma\left(\frac{n-1}{2}\right)},
\end{equation}
and again $a_n$ is independent of $\omega \in \SM^{n-1}$.
Therefore, we have
\begin{eqnarray}
(-\Delta)^{\frac{n-d}{2}}(f^\sharp)^\flat
& = & a_n^d\, (-\Delta)^{\frac{n-d}{2}}( f \ast V^d)
= a_n^d \,f \ast \left((-\Delta)^{\frac{n-d}{2}}V^d\right)
\nonumber\\
& = &
a_n^d\, b_{n,n-d}\,f \ast \delta
= a_n^d\, b_{n,n-d}\, f
\end{eqnarray}
and so we have Eq.~(\ref {inversion formula d}) with
$c_{n,d}=1/(a_n^d\, b_{n,n-d})$,
and this concludes the proof.
\end{proof}

\subsection{Spherical means in $\RM^3$}
Let us consider the case, studied by John \cite{John}, of a tomographic  transform that maps a function $f$ on $\RM^n$ into a function on the spheres in $\RM^n$ with a fixed radius $r$. For simplicity we consider the case $n=3$ and $r=1$.

The homogeneous spaces in duality, $\RM^3$ and $\tilde{\RM}^3 =$ set of all unit spheres in $\RM^3$ (parametrized by their centers), are described  by the following double fibration:
\begin{equation}
\begin{diagram}
\node[2]{I= E(3)/O(2)} \arrow{se,t}{\pi} \arrow{sw,t}{p}
\\
\node{\RM^3 = E(3)/O(3)}  \node[2]{\tilde{\RM}^3 = E(3)/\tilde{O}(3)}
\end{diagram}
\label{eq:doublefibrationdplanes}
\end{equation}
Indeed,  $G=E(3)$ and $K=O(3)$, the isotropy group of the origin $0$, hence $X=\RM^3=E(3)/O(3)$. Moreover, the isotropy group of  a reference unit sphere $\lambda_0 \in \tilde{\RM}^3$ passing trough the origin is $H= \tilde{O}(3)$ (rotation around the center of $\lambda_0$),
and thus $\Lambda=\tilde{\RM}^3=E(3)/ \tilde{O}(3)$. Finally, $L=K\cap H=
O(3)\cap \tilde{O}(3)=O(2)$ (rotation around the line joining $0$ and the center of $\lambda_0$), so that $I=E(3)/O(2)$.

\begin{figure}[t]
\includegraphics[width=0.45\textwidth]{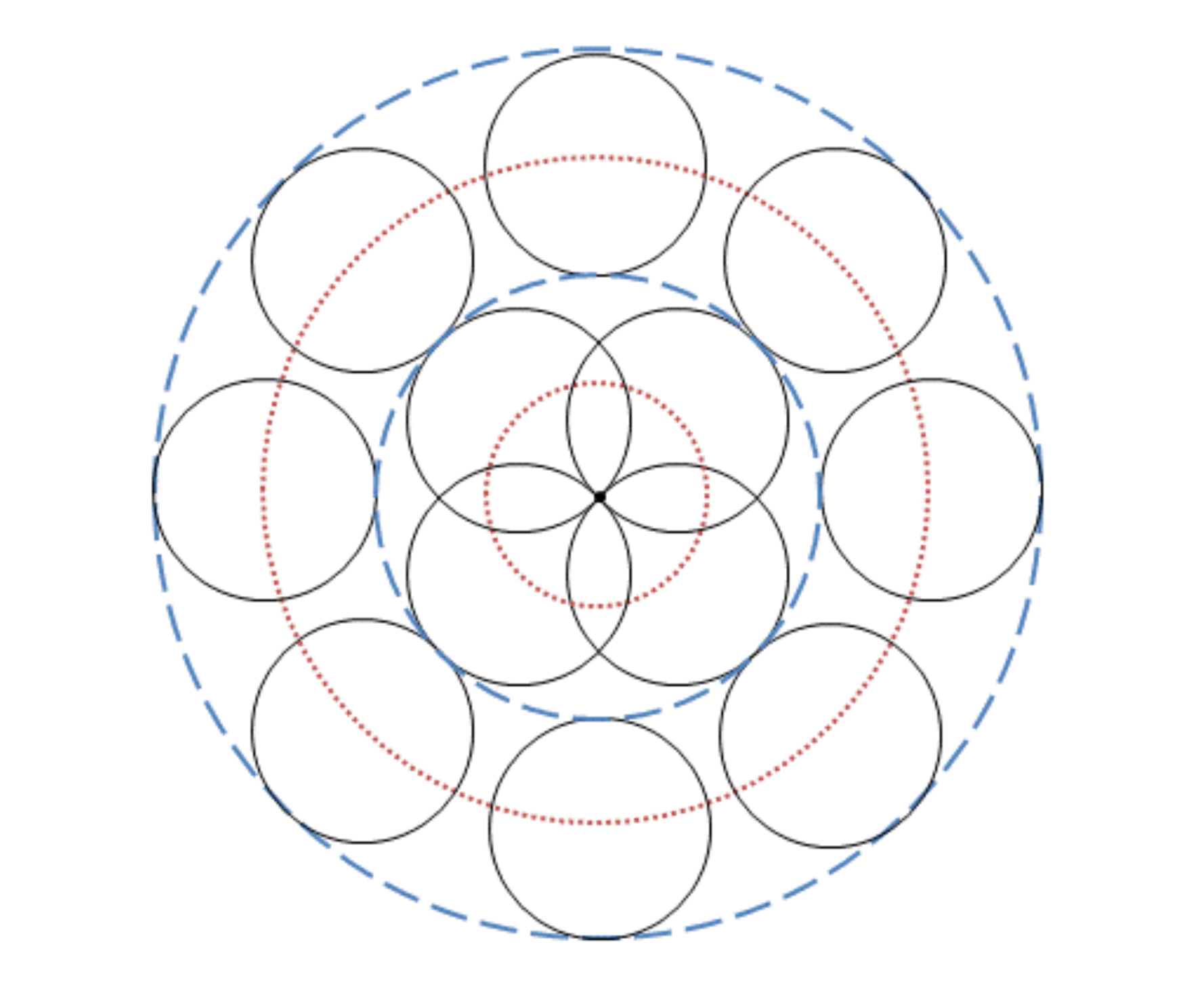}
\caption{Inversion formula (\ref{inv form sph}). Dotted lines are spheres with an odd radius, while dashed lines are spheres with an even radius.}
\label{fig:2}
\end{figure}

We define the tomogram of $f$ as its average over the unit sphere with center $x$
\begin{equation}
f^\sharp(x)=\frac{1}{\left|\SM^{2}\right|}\int_{\SM^2}f(x+\xi)\; \d \xi ,
\end{equation}
where
$\left|\SM^{2}\right|=4\pi$
is the surface of $\SM^{2}$.
We look for an inversion formula, namely we want to reconstruct the function $f$ using its tomograms on all possible unit spheres in $\RM^3$. This problem is equivalent to the following integral equation for the function $f$:
\begin{equation}\label{int eqn}
\frac{1}{\left|\SM^{2}\right|}\int_{\SM^2}f(x+\xi)\; \d \xi=g(x).
\end{equation}
In Ref.~\cite{John} John proved  that the solution of the integral equation (\ref{int eqn}) is unique, under mild regularity assumptions on $f$ and $g$. Therefore, also this tomographic mapping is one to one on $\mathcal{S}(\RM^3)$. We will now  explicitly construct its inverse.
To this purpose, consider the  following Cauchy problem for the wave equation
\begin{equation}\label{wave 1}
\partial^2_{t}v-\Delta v  =  0, \qquad
v(x,0)  =  0, \qquad
\partial_{t}v(x,0)  =  g(x) .
\end{equation}
The solution of this problem is
\begin{equation}
v(x,t)=tI^{(g)}(x,t),
\qquad
I^{(g)}(x,t)=\frac{1}{\left|\SM^{2}\right|}\int_{\SM^2}g(x+t \xi)\; \d \xi.
\end{equation}
We construct the function $u(x,t)$ as follows
\begin{equation}
u(x,t)=-\sum_{k=0}^{+\infty}\left[\partial_{t}v(x,2k+1+t)-\partial_{t}v(x,2k+1-t)\right]
\end{equation}
and define $f(x):=\partial_{t} u(x,0)$. Observe that $u(x,0)  =  0$, thus $u$ solves the Cauchy problem
\begin{equation}\label{wave 2}
\partial^2_{t}u-\Delta u  =  0 , \qquad
u(x,0)  =  0, \qquad
\partial_{t}u(x,0)  =  f(x) ,
\end{equation}
whence $u(x,t)=tI^{(f)}(x,t)$. In particular, we have
\begin{equation}
u(x,1) =I^{(f)}(x,1)
=-\sum_{k=0}^{+\infty}\left[\partial_{t}v(x,2k+2)-\partial_{t}v(x,2k)\right]
=\partial_{t}v(x,0)=g(x) \qquad
\end{equation}
and so we have proved that the function $f$ solves Eq.~(\ref{int eqn}). Moreover,
\begin{eqnarray}
f(x) &=& \partial_{t}u(x,0)
= -2\sum_{k=0}^{+\infty} \partial_{t}^2 v(x,2k+1)
\nonumber\\ &=&
-2\sum_{k=0}^{+\infty} \Delta v(x,2k+1)
= -2 \Delta \sum_{k=0}^{+\infty} (2k+1)I^{(g)}(x,2k+1) \nonumber\\
&=&  -2 \Delta \sum_{k=0}^{+\infty}\frac{2k+1}{\left|\SM^{2}\right|^2}\int_{\SM^2}\d \xi \int_{\SM^2}\d \eta\; f(x+(2k+1)\xi+\eta).
\end{eqnarray}
Therefore, we finally obtain
\begin{equation}
f(x)=  -2 \Delta \sum_{k=0}^{+\infty}\frac{2k+1}{\left|\SM^2\right|}\int_{\SM^2}\d \xi f^\sharp(x+(2k+1)\xi),
\label{inv form sph}
\end{equation}
which is a sum of the spherical means over all the spheres centered at $x$ and with odd radius.
See Fig.~\ref{fig:2}.

\section{Tomograms on hypersurfaces}
\label{sec-submanifolds}

A simple  mechanism that allows nonlinear generalizations of the
Radon transform is the combination of the standard transform with a
diffeomorphism of the underlying $\mathbb{R}^n$ space~\cite{tomocurved}. Let us
consider a function $f(q)$ on the $n$-dimensional space $q
\in \mathbb{R}^n$. The problem is to reconstruct $f$ from
 its integrals over an $n$-parameter family of submanifolds of codimension one.

We can construct such a family by diffeomorphic deformations of
the hyperplanes (in the $x\in\mathbb{R}^n$ space)
\begin{equation}
X -\mu \cdot x = 0, \label{eq:planes}
\end{equation}
with $X\in\mathbb{R}$ and $\mu\in\mathbb{R}^n$. Let us
consider a diffeomorphism of $\mathbb{R}^n$
\begin{equation}
q\in\mathbb{R}^n\mapsto x=\varphi(q)\in\mathbb{R}^n.
\end{equation}
The hyperplanes (\ref{eq:planes}) are deformed by $\varphi$ into a
family of submanifolds (in the $q$ space)
\begin{equation}
X -\mu \cdot \varphi(q) = 0.
\label{eq:submanifold}
\end{equation}
The case $n=2$ is displayed in Fig.\ \ref{fig:diffeo}.

\begin{figure}[t]
\includegraphics[width=0.6\textwidth]{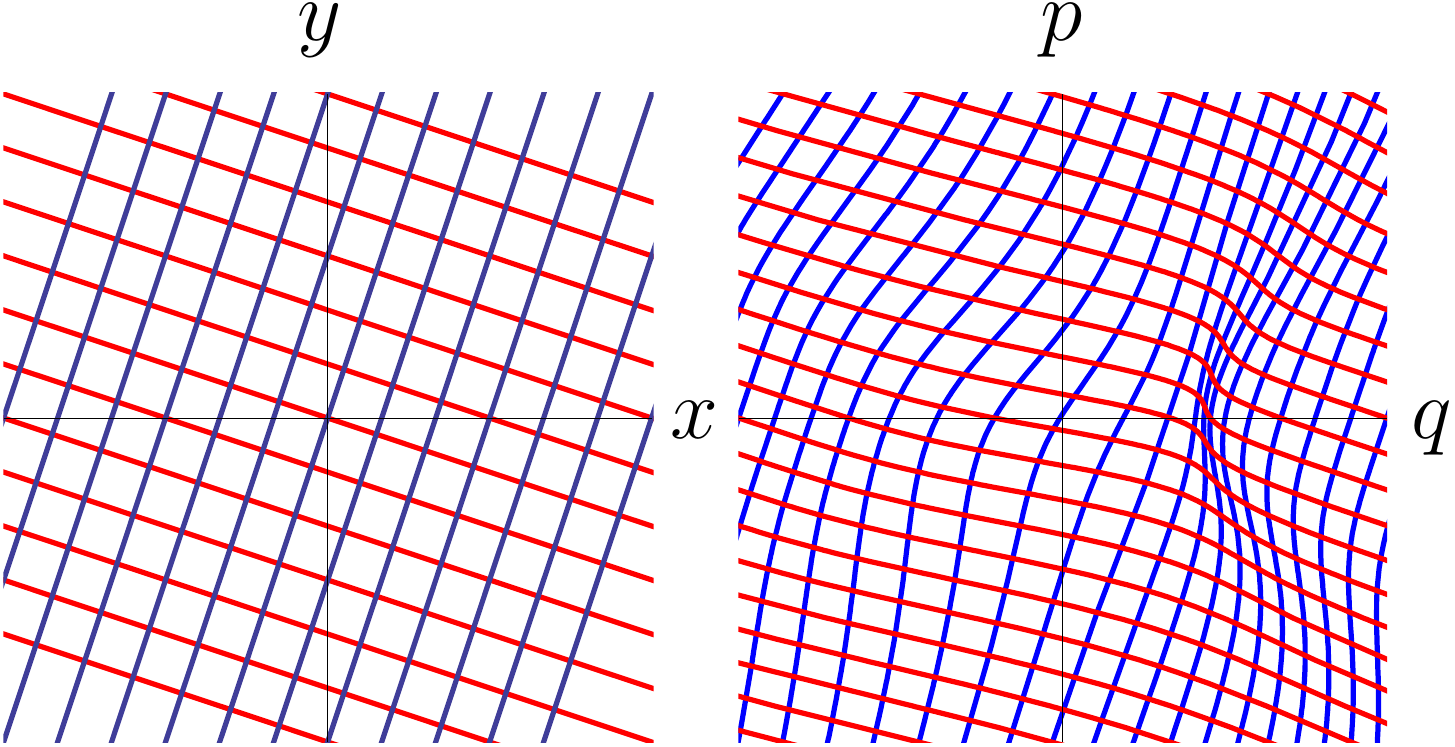}
\caption{Diffeomorphism of the plane: $(q,p)\in
\mathbb{R}^2 \to (x,y)=\varphi(q,p) \in \mathbb{R}^2$.}
\label{fig:diffeo}
\end{figure}

Given a probability density $\tilde f (x)$ on the $x$ space, the
$M^2$-transform can be rewritten as
\begin{equation}
\tilde{f}^{M^2}(X,\mu) = \int_{\mathbb{R}^n} \tilde f(x)\,  \delta(X-\mu\cdot x) \;  \d x
= \int_{\mathbb{R}^n} \tilde f(\varphi(q))\, \delta(X-\mu\cdot
\varphi(q))\,J(q)\; \d q ,
\label{eq:radondefns0}
\end{equation}
where
\begin{equation}\label{eq:Jacobian}
J(q)=\left|\frac{\partial x_i}{\partial
q_j}\right|=\left|\frac{\partial \varphi_i(q)}{\partial
q_j}\right|
\end{equation}
is the Jacobian of the transformation.

Observe now that
$ \tilde f(x)\; \d  x = \tilde f(\varphi(q))\, J(q)\; \d q$,
whence
\begin{equation}
f (q) = \tilde f(\varphi(q))\, J(q)
\end{equation}
is a probability density. Therefore the tomograms along the codimension-1  submanifolds~(\ref{eq:submanifold}) are given by
\begin{equation}
f^\varphi (X,\mu) = \left\langle \delta(X-\mu \cdot \varphi(q))\right\rangle_f
= \int_{\mathbb{R}^n} f(q)\, \delta(X-\mu\cdot \varphi(q))\; \d q ,
\label{eq:radondefns}
\end{equation}
with $X\in \mathbb{R}$ and $\mu \in \mathbb{R}^n_*$.

The inverse transform follows by (\ref{M^2 inversion formula}):
\begin{equation}
f(q) = \tilde f(\varphi(q))\, J(q)
=  \int_{\mathbb{R}^{n+1}} f^\varphi (X,\mu) \, J(q)\, \e^{\i(X-\mu\cdot
\varphi(q))}\; \frac{\d X\, \d\mu}{(2\pi)^n} , \quad
\label{eq:invgens}
\end{equation}
with a modified kernel
\begin{equation}\label{eq:kernel}
K_\varphi(q;X,\mu) = J(q)\, \e^{\i(X-\mu\cdot \varphi(q))}=
\left|\frac{\partial \varphi_i(q)}{\partial q_j}\right|\,
\e^{\i(X-\mu\cdot \varphi(q))}.
\end{equation}
Therefore, a probability density distribution on $\mathbb{R}^n$
\begin{eqnarray}
f(q) \geq 0, \quad \int_{\mathbb{R}^n} f(q) \; \d q =1,
\label{eq:denscond2}
\end{eqnarray}
produces tomograms $f^\varphi (X,\mu)$ that are probability
densities
\begin{eqnarray}
f^\varphi (X,\mu)  \geq 0, \quad \int_{\mathbb{R}} f^\varphi (X,\mu) \; \d X
=1, \quad \forall \mu\in\mathbb{R}^n_*.
\label{eq:normcond2}
\end{eqnarray}
We will now consider two applications of these
generalizations of the Radon transform. For additional examples see \cite {tomogram,tomocurved}.

\subsection{Circles in the plane}
\label{sec-circles}

In the punctured $(x,y)$ plane without the origin $(0,0)$,  the conformal inversion
\begin{equation}\label{eq:conformal}
(x,y)=\varphi(q,p)=\left(\frac{q}{q^2+p^2},\frac{p}{q^2+p^2}\right),
\end{equation}
maps the family of  lines
$X -\mu x - \nu y = 0$
into a family of circles
\begin{equation}
X (q^2+p^2) -\mu q  - \nu p = 0,
\label{eq:circles}
\end{equation}
centered at
$C=\left(\mu/ 2X, \nu/2X \right) $
and passing through the origin (see Fig.\ \ref{fig:circles}). When
$X=0$ they degenerate into lines through the origin.
The Jacobian reads
\begin{equation}
J(q,p)= (q^2+p^2)^{-2},
\end{equation}
whence the transformation is a diffeomorphism of the punctured plane
$\RM^2_*$ onto itself. The  singularity of the
transformation at the origin $(0,0)$ is irrelevant for  tomographic
integral transforms because it only affects a zero measure set.

\begin{figure}[t]
\includegraphics[width=0.4\textwidth]{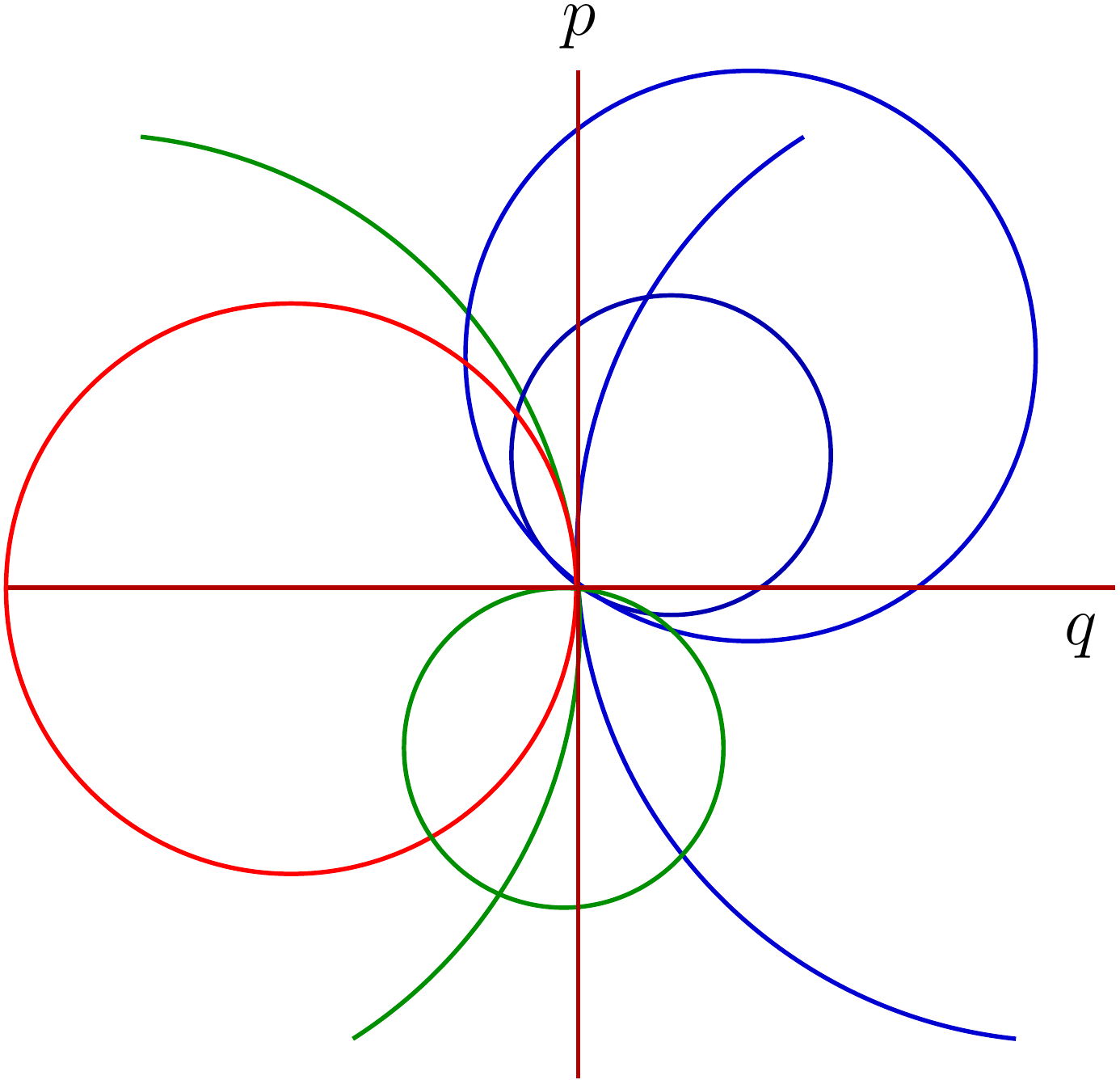} \quad
\includegraphics[width=0.5\textwidth]{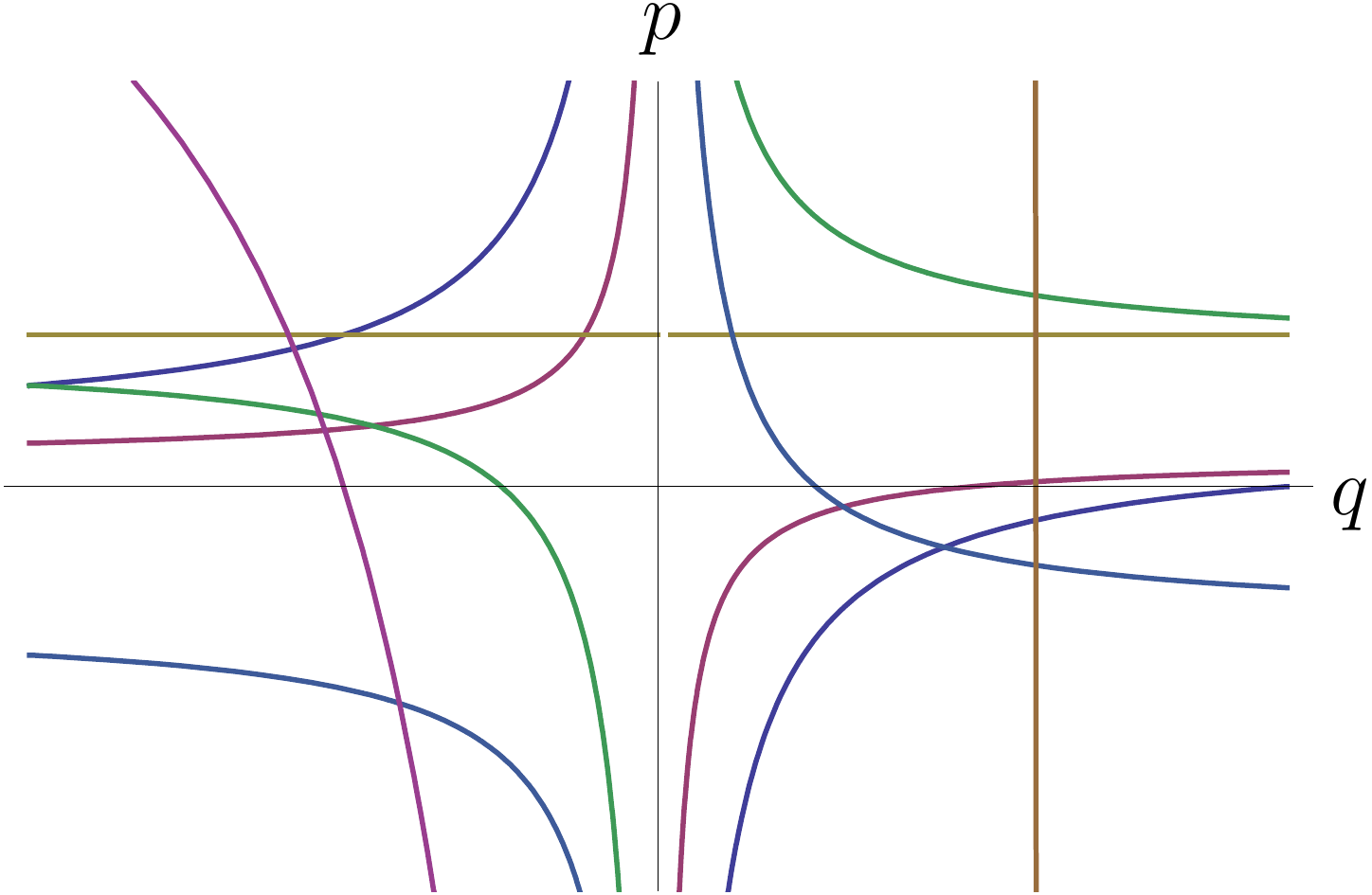}
\caption{Circular tomography (left panel) and hyperbolic tomography (right panel).
All circles pass through the origin and all hyperbolas have the vertical axis as asymptote.}
\label{fig:circles}
\end{figure}

Equations (\ref{eq:radondefns})-(\ref{eq:invgens}) become
\begin{eqnarray}
f^{\varphi}(X,\mu,\nu) = \left\langle
\delta\left(X-\frac{\mu q}{q^2+p^2} -\frac{\nu p}{q^2+p^2}\right)\right\rangle_f ,
\label{eq:radondefnc}
\end{eqnarray}
\begin{equation}
f(q,p) = \int_{\mathbb{R}^3} f^{\varphi}(X,\mu)\,
\frac{\e^{\i\left(X-\frac{\mu q}{q^2+p^2} -\frac{\nu
p}{q^2+p^2}\right)}}{{(2\pi)^2}(q^2+p^2)^2}\,
{\d X \,
\d\mu\, \d\nu} .
\label{eq:invgenc}
\end{equation}

\subsection{Hyperbolas in the plane}
\label{sec-hyperbolas}

In the  $(x,y)$ plane  the family of  lines
$ X -\mu x - \nu y = 0$
is mapped into a family of hyperbolas
\begin{equation}
X -\frac{\mu}{q}  - \nu p = 0,
\label{eq:hyperbola}
\end{equation}
with asymptotes
$q= 0$ and  $p= X/\nu$ ,
by the transformation
\begin{equation}\label{eq:inversion}
(x,y)=\varphi(q,p)=\left(
q^{-1},p\right).
\end{equation}
For $\mu>0$ the hyperbolas are in the second and fourth quadrants,
while for $\mu<0$ they are in the first and third quadrants (see
Fig.\ \ref{fig:circles}). When $\mu=0$ or $\nu=0$ they degenerate
into horizontal or vertical lines, respectively.
The Jacobian reads
\begin{equation}
J(q,p)= q^{-2}
\end{equation}
whence the transformation is a diffeomorphism in the cut plane
without the axis $(0,y)$.

Equations (\ref{eq:radondefns})-(\ref{eq:invgens}) become
\begin{equation}
f^\varphi(X,\mu,\nu) = \left\langle \delta\left(X-\frac{\mu}{q} -\nu p\right)\right\rangle_f ,
\label{eq:radondefnh}
\end{equation}
\begin{equation}
f(q,p) = \int_{\mathbb{R}^3} f^\varphi(X,\mu)\, \frac{1}{q^2}\,
\e^{\i\left(X-\frac{\mu}{q}-\nu p\right)}\; \frac{\d X\, \d\mu\, \d\nu}{(2\pi)^2} .
\label{eq:invgenh}
\end{equation}

\section{Quantum tomography}
We now consider a possible generalization of the above classical techniques to the quantum case.
The most direct path is that of using  the phase-space description of quantum mechanics through Wigner quasi-distribution functions \cite{Wig32,Moyal,Hillary84}. To this purpose, let us consider the phase space $T^*\RM^n\simeq \RM^{2n}$ of a classical system with $n$ degrees of freedom. Let us assume that the (statistical) state of the system is given by a probability density function $f(q,p)$ in $\mathcal{S}(\RM^{2n})$. The expectation value of an observable, which is represented by a function $\sigma(q,p)$ on the phase space, in the state $f$ is given by
\begin{equation}
\langle \sigma \rangle_f = \int_{\RM^{2n}} \sigma(q,p)\, f(q,p) \; \d q\,\d p .
\end{equation}
The $M^2$-transform (\ref{M^2transform}) of the density $f$, and its inverse (\ref{M^2 inversion formula}) read
\begin{eqnarray}
f^{M^2}(X,\mu,\nu) &=& \langle\delta(X-\mu \cdot q-\nu\cdot p) \rangle_f
= \int_{\RM^{2n}}  f(q,p)\, \delta(X-\mu \cdot q-\nu\cdot p)\; \d q\, \d p, \nonumber\\
f(q,p) &=& \int_{\RM^{2n+1}} f^{M^2}(X,\mu,\nu)\, \e^{\i(X-\mu \cdot q- \nu\cdot p)} \; \frac{\d X \d\mu \d\nu}{(2\pi)^{2n}}.
\label{eq:m2ps}
\end{eqnarray}
with $X\in\RM$ and $(\mu, \nu) \in\RM^{2n}_*$. Note that the tomogram  $f^{M^2}_{(\mu,\nu)}$ is nothing but the probability density function of the observable $X(q,p)=\mu \cdot q+\nu\cdot p$.
A very special set of observables are the canonical coordinates: position $q=(q_1, \dots, q_n)$ and momentum $p=(p_1,\dots,p_n)$. One obtains their density directly from the tomograms (\ref{eq:m2ps}) along the Cartesian axes. At $\mu_i= \delta_{i,j}$ and $\nu=0$ one has $X(q,p)=q_j$ and thus
\begin{eqnarray}
& & f^{M^2}(X,(\delta_{i,j}),0)
= \int_{\RM^{2n}}  f(q,p)\, \delta(X-q_j)\; \d q\, \d p
\nonumber\\
& &\quad = \int_{\RM^{2n-1}}  f((q_1,\dots,q_{j-1},X,q_{j+1},\dots,q_n), p)\;
\d q_1 \dots \d q_{j-1} \d q_{j+1}\dots \d q_n \, \d p,\qquad
\end{eqnarray}
that is the marginal density of the position component $q_j$. Analogously, at $\mu=0$ and $\nu_i= \delta_{i,j}$ one has $X(q,p)=p_j$ and
\begin{equation}
f^{M^2}(X,0,(\delta_{i,j}))
= \int_{\RM^{2n}}  f(q,p)\, \delta(X-p_j)\; \d q\, \d p
\end{equation}
is the marginal density of $p_j$.

In quantum mechanics (more precisely, in its Schr\"{o}dinger representation)
the Cartesian canonical coordinates $q$ and $p$ are promoted to operators, $\hat{q}$ and $\hat{p}$, that act in the Hilbert space of square summable functions $L^2(\RM^n)$ endowed with the scalar product
\begin{equation}
\langle \psi, \phi\rangle = \int_{\RM^n} \overline{\psi(x)}\, \phi(x)\; \d x,
\end{equation}
with $\psi,\phi\in L^2(\RM^n)$ and the overline denoting complex conjugation.
Precisely, $\hat{q}$ is the multiplication operator $(\hat{q} \psi)(x)=x \psi(x)$, and the $\hat{p}=-i\hbar \nabla$ is a differential operator, both with maximal domains. Here, $\hbar$ is the Planck constant. Pure quantum states of the system are associated to normalized elements $\psi\in L^2(\RM^n)$, called wave functions, $\|\psi\|_2^2= \langle\psi,\psi\rangle=1$. The expectation value of position (momentum) in the wave function $\psi$ is given by the sandwich $\langle \psi, \hat{q} \psi\rangle$ ($\langle \psi, \hat{p} \psi\rangle$). It  easily follows that $x\mapsto |\psi(x)|^2$ is the marginal probability density  of position (correctly normalized, since $\|\psi\|_2=1$). Moreover, by Fourier transform it is also not difficult to show
that the marginal of momentum is given by $p\mapsto |\hat{\psi}(p/\hbar)|^2/(2\pi\hbar)^n$ (prove it!).

Now we want to define the quantization of any classical observable $\sigma$ (also called symbol), that for simplicity we assume in $\mathcal{S}(\RM^{2n})$. Observe that
\begin{equation}
\sigma(q,p)= \int_{\RM^{2n}}\hat{\sigma}(\eta,y)\, \e^{\i (\eta\cdot q+ y\cdot p) }\;  \frac{\d\eta\,\d y}{(2\pi)^{2n}},
\end{equation}
so, by following Weyl, we can construct an operator out of $\sigma$ by substituting $q$ with the position operator $\hat{q}$ and $p$ with the momentum operator $\hat{p}$. Thus, we are led to consider the unitary operators
\begin{equation}
(y,\eta)\mapsto \e^{\i(\eta \cdot \hat{q}+y\cdot \hat{p})} , \qquad \left(\e^{\i(\eta \cdot \hat{q}+y\cdot \hat{p})}\psi\right)(x)=\e^{\i\eta\cdot (x+\hbar y/2)}\psi(x+\hbar y)
\label{eq:SchrHeis}
\end{equation}
(the Schr\"{o}dinger representation on $L^2(\mathbb{R}^n)$ of the Heisenberg group \cite{Folland}), with $(y,\eta)\in\RM^{2n}$ and $\psi\in L^2(\mathbb{R}^n)$.
In conclusion, for any $\sigma\in\mathcal{S}(\RM^{2n})$
we define the \emph{Weyl quantization} of the symbol $\sigma$, the operator
\begin{equation}\label{weyl}
\mathrm{Op}(\sigma)= \int_{\RM^{2n}}\hat{\sigma}(\eta,y)\,\e^{\i(\eta \cdot \hat{q}+y\cdot \hat{p})} \; \frac{\d\eta \d y}{(2\pi)^{2n}},
\end{equation}
whose action on any $\psi\in L^2(\RM^n)$ is
\begin{equation}
(\mathrm{Op}(\sigma)\psi)(x) =  \int_{\RM^{n}} K_{\sigma}(x,y)\, \psi(y) \; \d y
\end{equation}
 with a kernel
\begin{equation}\label{kernel}
K_{\sigma}(x,y)= \int_{\RM^{n}}\sigma\left(\frac{x+y}{2}, \eta \right)\, \e^{\i \eta\cdot (x-y)/\hbar} \;  \frac{\d\eta}{(2\pi\hbar)^{n}} .
\end{equation}
Observe that the kernel $K_{\sigma}$ defined in (\ref{kernel}) is obtained from $\sigma$ by a partial Fourier transform followed by an invertible and measure-preserving change of variables. These operations make sense when $\sigma$ is an arbitrary temperated distribution and define $K_{\sigma}$ as a temperated distribution. Therefore, it is possible to define the quantization of a generic temperated distribution. Later we will use the quantization of the Dirac $\delta$ distribution.

Now we want the inverse mapping of the Weyl quantization, namely we want to obtain the symbol starting from the operator. From~ (\ref{kernel}) it is easy to check that the \emph{dequantization} of the operator $S$ whose kernel is $K(x,y)$ is given by the symbol
\begin{equation}\label{dequantization}
\sigma(q,p)= \mathrm{Op}^{-1}(S) (q,p) = (2\pi\hbar)^n\, W(K)(q,p),
\end{equation}
where
\begin{equation}
W(F)(q,p)= \int_{\RM^{n}}F\left(q+ \frac{y}{2},q- \frac{y}{2}\right)\, \e^{-\i y\cdot p /\hbar}\; \frac{\d y}{(2\pi \hbar )^{n}}
\label{eq:WignerR2n}
\end{equation}
is known as the \emph{Wigner transform} of $F \in L^2(\RM^{2n})$. This is the fundamental connection between Weyl quantization and Wigner transform.
By a straightforward direct computation, that we leave as an exercise, one has the following remarkable relation:
\begin{equation}\label{eq:trace relation}
\Tr(\mathrm{Op}\, (\sigma)\mathrm{Op}(\tau)) :=
\int_{\RM^{2n}} K_{\sigma}(x,y)\, K_{\tau}(y,x)\; \d x\, \d y
= \int_{\RM^{2n}}\sigma(q,p)\, \tau(q, p)\; \frac{\d q\, \d p}{(2\pi \hbar)^n},
\end{equation}
for any $\sigma, \tau \in\mathcal{S}(\RM^{2n})$.
This property  will be crucial for our tomographic framework.

The statistical states of a quantum system are described by positive (self-adjoint) operators with unit trace on the Hilbert space $L^2(\RM^n)$, namely $\rho \geq 0$, $\rho^{*}=\rho$ and $\Tr\rho =1$, that are called \emph{density operators}.  It is easy to check that  a density operator $\rho$ is an integral operator on $L^2(\RM^n)$, and if
\begin{equation}
\rho=\sum_{k} \lambda_k   \psi_k \langle \psi_k, \cdot \rangle
\end{equation}
is the spectral resolution of $\rho$, with $\{\psi_k\}$ an orthonormal eigenbasis and $1\geq \lambda_k \geq \lambda_{k+1}\geq 0$ the corresponding eigenvalues, the trace condition is equivalent to $\sum_k \lambda_k =1$,
and the kernel of $\rho$ is (with a little abuse of notation)
\begin{equation}
\rho(x,y)=\sum_{k} \lambda_k  \psi_k(x) \overline{ \psi_k(y)}.
\end{equation}
The expectation value in the state $\rho$  of an observable associated to a bounded operator $S$ is given by
\begin{equation}
\langle S \rangle_\rho = \Tr(S\, \rho) = \sum_k \lambda_k \langle \psi_k, S\psi_k\rangle.
\label{eq:quantumexp}
\end{equation}
 Therefore, the density matrix $\rho$ is given by a convex combination of wave functions  $\psi_k$ with weights $\lambda_k$.  Physically, it represents a classical statistical mixture of pure quantum states extracted with probabilities $\lambda_k$.
By making use of Eqs.~(\ref{dequantization}) and (\ref{eq:trace relation}), the quantum expectation value (\ref{eq:quantumexp}) of $S= \mathrm{Op}(\sigma)$ reads
\begin{equation}\label{mean value wigner}
\langle \mathrm{Op}(\sigma)\rangle_\rho = \Tr\left(\mathrm{Op}(\sigma)\, \rho \right) = \int_{\RM^{2n}}\sigma(q,p)\, W(\rho)(q, p)\; \d q\, \d p = \langle \sigma \rangle_{W(\rho)}.
\end{equation}
This equation expresses the expectation value of a quantum observable in the quantum state $\rho$ as a classical expectation value of its symbol in the ``classical'' state $W(\rho)$.
In this respect, $W(\rho)$ is called \emph{Wigner quasi-distribution function} of $\rho$ and is a substitute for the nonexistent joint probability density  of momentum and position in the quantum state $\rho$. Indeed, since the uncertainty principle
imposes a limit on the precision with which momentum and position can be determined at the same time in the state $\rho$, it does not make sense to speak of a joint probability distribution for these observables.
In fact, $W(\rho)$ is not a genuine probability density, because it may assume negative values (e.g.\ if $\rho=\psi \langle\psi,\cdot\rangle$, with $\psi \in L^2(\RM^n)$ odd, we have that $W(\rho)(0,0)=-\|\psi\|_2^2 /(\pi \hbar)^n<0$). It is, however, always real and in some sense it tries very hard to be a joint density for momentum and position. The above discussion provides supporting evidence for this heuristic assertion.
Moreover, it is easy to check that $W(\rho)$ has the right marginals:
\begin{eqnarray}
\int_{\RM^n}W(\rho)(q,p)\; \d q &=& \frac{1}{(2\pi \hbar)^n} \hat{\rho} (p/\hbar,p/\hbar)
=\frac{1}{(2\pi \hbar)^n} \sum_k \lambda_k \left|\hat{\psi}_k(p/\hbar)\right|^2 ,
\nonumber \\
\int_{\RM^n}W(\rho)(q,p)\; \d p &=& \rho(q,q)=\sum_k \lambda_k \left|\psi_k(q)\right|^2.
\end{eqnarray}

Now we can apply all the results obtained in Secs.~\ref{sec: Radon} and \ref{sec:M^2}
to reconstruct the state of a quantum system using a tomographic approach.
In the quantum framework the object that we want to reconstruct is the state $\rho$ of the system. Therefore, let us compute the expectation of
\begin{equation}
\delta(X-\mu\cdot\hat{q}-\nu\cdot\hat{p}) := \mathrm{Op}\left(\delta(X-\mu\cdot q-\nu\cdot p)\right)
\label{eq:quadratures}
\end{equation}
in the state $\rho$.
By (\ref{mean value wigner})  and (\ref{eq:m2ps}) one has
\begin{eqnarray}
\left\langle \delta(X-\mu\cdot\hat{q}-\nu\cdot\hat{p}) \right\rangle_{\rho}
= \left\langle \delta(X-\mu\cdot q-\nu\cdot p ) \right\rangle_{W(\rho)} =
W(\rho)^{M^2}(X,\mu,\nu).
\end{eqnarray}
Therefore, by measuring the expectation in the state $\rho$ of the family of operator-valued distributions (\ref{eq:quadratures}),
with $X \in \RM$, $(\mu,\nu)\in\RM_*^{2n}$,  we in fact are gathering  all the tomograms of the Wigner function $W(\rho)$. Therefore, by  applying  the inversion formula (\ref{eq:m2ps}) we can recover $W(\rho)$, that is, up to a normalization factor, the symbol of the density operator $\rho$ [see Eq.~(\ref{dequantization})]. Thus, by quantizing it we can reconstruct the quantum state $\rho$. Explicitly, by using Eqs.~(\ref{dequantization}), (\ref{kernel}) and (\ref{eq:m2ps}), we get
\begin{eqnarray}
\rho(x,y) & = & \int_{\RM^n} W(\rho)\left(\frac{x+y}{2},\eta\right)\, \e^{\i\eta\cdot(x-y)/\hbar} \; \d\eta
\nonumber\\
& = &  \int_{\RM^n} \d\eta\; \e^{\i\eta\cdot(x-y)/\hbar} \int_{\RM^{2n+1}}
\frac{\d X\, \d\mu\, \d\nu}{(2\pi)^{2n}}\; W(\rho)^{M^2} (X,\mu,\nu)
\, \e^{\i (X - \mu \cdot(x+y)/2 -\nu\cdot \eta)}
\nonumber\\
& = &   \int_{\RM^{2n+1}}
W(\rho)^{M^2} (X,\mu,\nu)
\, \e^{\i (X - \mu \cdot(x+y)/2)} \, \delta(\nu- (x-y)/\hbar)\;  \frac{\d X\, \d\mu\, \d\nu}{(2\pi)^{n}}
\nonumber\\
& = &   \int_{\RM^{n+1}}
 W(\rho)^{M^2} (X,\mu,(x-y)/\hbar)
\; \e^{\i (X - \mu \cdot(x+y)/2)}\; \frac{\d X\,  \d\mu}{(2\pi)^{n}}.
\end{eqnarray}
We recall that the tomogram $W(\rho)^{M^2}_{(\mu,\nu)}$ is the marginal probability density of the quantum observable $X_{(\mu,\nu)}(\hat{q},\hat{p})=\mu\cdot\hat{q}+\nu\cdot\hat{p}$. When $\xi=(\mu,\nu)\in \SM^{2n-1}$ the observable $X_\xi$ is nothing but the component of a generalized position, called quadrature, in a rotated frame of the phase space. From the experimental point of view, quadratures of photons can be measured in a typical quantum optics setup by a homodyne detection scheme.
A different scheme was proposed for the longitudinal motion of neutrons \cite{reconstruct06}.
Recently, in a spectacular experiment \cite{konst} quadratures were measured  for reconstructing the transversal motional states of helium atoms, by using a free evolution of the wave packet followed by a position sensitive measurement.

The determination of the quantum state represents a highly
nontrivial problem, whose history can be traced back to the early
days of quantum mechanics, namely to the Pauli problem of reconstructing a wave function from its marginals in position and momentum.
The experimental validation had to wait until quantum
optics opened a new era. In this section we have followed the idea of Vogel and
Risken \cite{Vog-Ris}. Since then many improvements and new
techniques have been proposed \cite{Mancini95,hradil,Marmoopen}. An up-to-date overview can be found in Ref.~\cite{Jardabook}.

\subsection*{Notation}
\vspace{-0.5cm}
\small{\begin{eqnarray*}
f^\sharp, g^\flat & & \text{Radon transform and its dual, see Eq. (\ref{sharp lambda1}) and Eq. (\ref{dual map})} \\
f^\sharp_{\xi} & & \text{Tomogram of $f$ along the direction $\xi$, see Eq. (\ref{tomogram})} \\
f^{M^2} & & \text{$M^2$ transform, see Eq. (\ref{M^2transform})} \\
\mathcal{S}(M) & & \text{Schwartz space of rapidly decreasing functions on $M$,  see Eq.~(\ref{eq:Schwartz})} \\
\delta(x) & & \textrm{Dirac delta distribution at $0$} \\
\hat{f},\check{f} & & \text{Fourier transform and its inverse, see Eqs.~(\ref{fourier transform})-(\ref{inv fourier transform})} \\
W(\rho) && \text{Wigner transform of the state $\rho$, see Eq.~(\ref {eq:WignerR2n}) } \\
\textrm{Op}(\sigma)  && \text{Weyl quantization of the symbol $\sigma$, see Eq. (\ref{weyl})} \\
\PM^{n}_d & & \text{Space of all the $(n-d)$-dimensional affine subspaces of $\RM^n$.\; ($\PM^{n} =\PM^{n}_1$)} \\
\RM^n_* &=& \RM^n\setminus \{0\},\quad \RM^+=[0,+\infty), \quad  \NM_*=\NM \setminus \{0\} \\
\Gamma(z)&=& \int_{0}^{+\infty}\lambda^{z+1}\, \e^{-\lambda}\;\d \lambda, \text{   Euler gamma function}
\end{eqnarray*}}


\begin{theacknowledgments}
We would  like to thank Beppe Marmo, Manolo Asorey, Volodya Man'ko, and Saverio Pascazio for stimulating
discussions.
P.~F.\ would like to thank the organizers,
J.~F.\ Cari\~{n}ena, E.\ Mart\'{i}nez, J.\ Clemente-Gallardo,
J.~N.\ da Costa, and D.\ Mart\'{i}n de Diego, for their kindness in inviting him and for the effort they exerted on the organization of the workshop.
This work is partly supported by the European Union
through the Integrated Project EuroSQIP.
\end{theacknowledgments}



\bibliographystyle{aipproc}   


\begin{thebibliography}{9}


\bibitem{tomogram}
M. Asorey, P. Facchi, V. I. Man'ko, G. Marmo, S. Pascazio and E. C.
G. Sudarshan, \textit{Phys. Rev. A} \textbf{76}, 012117 (2007).

\bibitem{tomocurved}
M. Asorey, P. Facchi, V. I. Man'ko, G. Marmo, S. Pascazio and E. C.
G. Sudarshan, \textit{Phys. Rev. A} \textbf{77}, 042115 (2008).

\bibitem{reconstruct06}
G. Badurek, P. Facchi, Y. Hasegawa, Z. Hradil, S. Pascazio, H. Rauch, J. \v{R}eh\'{a}\v{c}ek, and T. Yoneda,
\textit{Phys. Rev. A} \textbf{73}, 032110
(2006).

\bibitem{Deans}
S. R. Deans, \textit{The Radon transform and some of its applications}, Dover Publications, New York, 2007.

\bibitem{Evans}
L. C. Evans, \textit{Partial Differential Equations}, Graduate Studies in Mathematics, Vol. 19,
American Mathematical Society, Providence, Rhode Island, 2001

\bibitem{m2}
P. Facchi, M. Ligab\`{o} and S. Pascazio, ``On the inversion of the Radon transform: standard versus $M^2$ approach", \textit{Journal of Modern Optics}, First published on: 01 December 2009 (iFirst).

\bibitem{Folland}
G. B. Folland, \textit{Harmonic Analysis in Phase Space} (AM-122),  Princeton University Press, Princeton, 1989.

\bibitem{Gelf}
I. M. Gel'fand and G. E. Shilov, \textit{Generalized Functions:
Properties and Operations}, Vol. 5,
Academic Press, 1966.

\bibitem{Helgason}
S. Helgason,
\textit{The Radon Transform}, Birkhauser, Boston, 1980.

\bibitem{Hillary84}
M. Hillary, R. F. O'Connell, M. O. Scully, and E. Wigner, \textit{Phys.
Rep.} \textbf{106}, 121 (1984).

\bibitem{hradil}
Z. Hradil, \textit{Phys. Rev. A} \textbf{55}, R1561 (1997).

\bibitem{John}
F. John, \textit{Plane waves and spherical means: Applied to Partial
Differential Equations}, Wiley Interscience, New York, 1955.

\bibitem{konst}
C. Kurtsiefer, T. Pfau, and J. Mlynek, \textit{Nature} \textbf{386}, 150
(1997).

\bibitem{Lieb}
H. Lieb and M. Loss, \textit{Analysis}, Graduate Studies in Mathematics, Vol. 14,
American Mathematical Society, Providence, Rhode Island, 2001.

\bibitem{Mancini95}
S. Mancini, V. I. Man'ko and P. Tombesi, \textit{Quantum Semiclass. Opt.}
{\bf 7}, 615 ( 1995).

\bibitem{MarmoPhysScr}
O. V. Man'ko, V. I. Man'ko, G. Marmo, \textit{Phys. Scr.}
\textbf{62}, 446 (2000).

\bibitem{MarmoJPA}
O. V. Man'ko, V. I. Man'ko, G. Marmo, \textit{J. Phys. A: Math. Gen.}
 \textbf{35}, 699 (2002)

\bibitem{MarmoPL}
V. I. Man'ko, G. Marmo, A. Simoni, A. Stern, E. C. G. Sudarshan,
\textit{Phys. Lett. A} \textbf{35}, 351 (2005).

\bibitem{Marmoopen}
V. I. Man'ko, G. Marmo, A. Simoni, F. Ventriglia, \textit{Open Sys.
\& Information Dyn.} \textbf{13}, 239 (2006).

\bibitem{Moyal}
J Moyal, \textit{Proc. Camb. Phil. Soc.} \textbf{45}, 99 (1949).

\bibitem{Jardabook}
M. G. A. Paris and J. \v{R}eh\'{a}\v{c}ek,  \textit{Quantum State Estimation}, Lecture Notes in Physics Vol.
649,
Springer, Berlin, 2004.

\bibitem{Rad1917}
J. Radon, ``\"Uber die bestimmung von funktionen durch ihre
integralwerte l\"angs dewisse mannigfaltigkeiten", \textit{Breichte
Sachsische Akademie der Wissenschaften}, Leipzig,
Mathematische-Physikalische Klasse, {\bf 69} S. 262 (1917).

\bibitem{Strichartz}
R. S. Strichartz, \textit{American Mathematical Monthly} \textbf{89}, 377
(1982).

\bibitem{Vog-Ris}
K. Vogel and H. Risken, \textit{Phys. Rev. A} {\bf 40}, 2847 (1989).

\bibitem{Wig32}
E. P. Wigner, \textit{Phys. Rev.} {\bf 40}, 749 (1932).

\end{thebibliography}




\end{document}